\title{Identifying a minimal class of models for high-dimensional data}
\author[1,2]{Daniel Nevo}
\author[1,3]{Ya'acov Ritov\thanks{Supported by part by ISF grant 1770/15.} }
\affil[1]{Department of Statistics, The Hebrew University of Jerusalem}
\affil[2]{Departments of Biostatistics and Epidemiology\\

 Harvard T.H. Chan School of Public Health}
\affil[3]{Department of Statistics, University of Michigan}
\date{}
\newcommand{\bL}{\hat{\beta}^L}
\newcommand{\bEN}{\hat{\beta}^{EN}}
\newcommand{\bz}{\beta^0}
\newcommand*{\rom}[1]{\expandafter\@slowromancap\romannumeral #1@}
\begin{document}

\maketitle
\begin{abstract}
Model selection consistency in the high-dimensional regression setting can be achieved only if strong assumptions are fulfilled. We therefore suggest to pursue a different goal, which we call a minimal class of models. The minimal class of models includes models that are similar in their prediction accuracy but not necessarily in their elements. We suggest a random search algorithm to reveal candidate models. The algorithm implements simulated annealing while using a score for each predictor that we suggest to derive using a combination of the Lasso and the Elastic Net. The utility of using a minimal class of  models is demonstrated in the analysis of two datasets.
\end{abstract}
\noindent \textbf{Keywords.} Model selection; High-dimensional data; Lasso; Elastic-Net; Simulated annealing 
\section{Introduction}
\label{Sec:intro}

High dimensional statistical problems have been arising as a result of the vast amount of data gathered today. A more specific problem is that estimation of the usual linear regression coefficients vector cannot be performed when the number of predictors exceeds the number of observations. Therefore,  a sparsity assumption is often added. For example, the number of regression coefficients that are not equal to zero is assumed to be small.
If it was known  in advance which predictors have non zero coefficients, the classical linear regression estimator could have been used. Unfortunately, it is not known. Even worse, the natural relevant discrete optimization  problem is usually not computationally feasible.

The Lasso estimator, \citeshort{tibshirani1996regression}, which solves the problem of minimizing prediction error together with a $\ell_1-$norm penalty, is possibly the most popular method to address this problem, since it results in a sparse estimator. Various algorithms are available to compute this estimator [e.g., \citeshort{friedman2010regularization}].  The theoretical properties of the Lasso  have been throughly researched  in the last decade. For the high-dimension problem, prediction rates were established in various manners, \citeshort{greenshtein2004persistence,bunea2006aggregation,bickel2009simultaneous,bunea2007sparsity,meinshausen2009lasso}. The capability of the Lasso to choose the correct model depends on the true coefficient vector and the matrix of the predictors,  or more precisely, on its Gram matrix, \citeshort{meinshausen2006high,zhao2006model,zhang2008sparsity}. However, the underlying assumptions are typically rather restrictive, and cannot be checked in practice.

In order to overcome its initial disadvantages, many modifications of the Lasso were suggested. For example,  the Adaptive Lasso, \citeshort{zou2006adaptive}, is a two stage procedure with a second step weighted Lasso, that is, some predictors get less penalty than others;
When a grouped structure of the predictors is assumed, the Group Lasso, \citeshort{yuan2006model}, is often used; The Elastic Net estimator, \citeshort{zou2005regularization}, is intended to deal with correlated predictors. It is obtained by adding a penalty on the $\ell_2-$norm of the coefficients vector together with the $\ell_1$ Lasso penalty. \citeshort{zou2005regularization} also empirically found that the Elastic Net's prediction accuracy is better than the Lasso's.

In the high-dimensional setting, the task of finding the true model might be too ambitious, if meaningful at all. Only in certain situations, which could not be identified in practice, model selection consistency is guaranteed. Even in the classical setup, with more observations than predictors, there is no model selection consistent estimator unless further assumptions are fulfilled. This leads us to present a different objective. Instead of searching for a single ``true'' model, we aim to present a number of possible models a researcher should look at. Our goal, therefore, is to find potential good prediction models. Since data are not generated by computer following one's model, there is a benefit in finding several models with similar performance if they exist. In short, we suggest to find the best models for each small model size. Then, by looking at these models one may reach interesting conclusions regrading the underlying problem. Some of these, as we do below, can be concluded using statistical reasoning, but most of these should be reasoned by a subject matter expert.

In order to find these models, we implement a search algorithm that uses simulated annealing, \citeshort{kirkpatrick1983optimization}. The algorithm is provided with a ``score'' for each predictor that we suggest to get using a multi-step procedure that implements both the Lasso and the Elastic Net (and then the Lasso again). Multi-step procedures in the high-dimensional setting have drawn some attention and were demonstrated to be better than using solely the Lasso, \citeshort{zou2006adaptive,bickel2010hierarchical}.

The rest of the paper is organized as follows. Section \ref{Sec:descrip} presents the concept of minimal class of models and the notations.
 Section \ref{Sec:algo} describes a search algorithm for relevant models, and gives motivation for the sequential use of the Lasso and the Elastic Net when calculation a score to each predictor. Section \ref{Sec:numerRes} consists of a simulation study and two examples of data analysis using a minimal class of models.  Section \ref{Sec:disc} suggests a short discussion. Technical proofs and supplementary data are provided in the appendix.
\section{Description of the problem}
\label{Sec:descrip}
We start with notations.  First, denote $||v||_q:=(\sum v_j^q)^{1/q}, \; q>0$ for the $\ell_q$ (\emph{pseudo}) norm of any vector $v$, $\|v\|_0 =\lim_{q\to 0}||v||_q$, the cardinality of $v$.
The data consist of  a  predictors matrix, $X_{n \times p}=(X^{(1)}\  X^{(2)}\ ...\ X^{(p)})$ and  a response vector, $Y_{n\times 1}$. WLOG, $X$ is centered and scaled
 and $Y$ is centered as well.  We are mainly interested in the case $p>n$. The underling model is $Y=X\beta+\epsilon$ where $\epsilon_{n\times 1}$ is a random error, $E(\epsilon)=0, \; V(\epsilon)=\sigma^2I$, $I$ is the identity matrix. $\beta$ is an unknown parameter and its true value is denoted by $\bz$.

 Denote $S\subseteq \{1,...,p\}$ for a set of indices of $X$. We call $S$  \textit{a model}. We use $s=|S|$ to denote the cardinality of the set $S$. Denote also $S_0:=\{j:\bz\ne 0\}$ and $s_0=|S_0|$ for the true model, and its size, respectively. For any model $S$, we define $X_S$ to be the submatrix of $X$ which includes only the columns specified by $S$.  Let $\hat{\beta}^{LS}_S$ to be the usual least square (LS) estimator corresponding to a model $S$, that is,
\begin{equation*}
\hat{\beta}^{LS}_S= (X_S^T X_S)^{-1}X^T_SY,
\end{equation*}
provided $X_S^T X_S$ is non singular.

Now, the straightforward approach to estimate $S_0$ given a model size $\kappa$ is to consider the following optimization problem:
\begin{equation}
\label{Eq:ellzerooptim}
\underset{\beta}{\text{min}}  \frac{1}{n}||Y-X\beta||_2^2,  \qquad \text{s.t } \quad ||\beta||_0 = \kappa.
\end{equation}
Unfortunately, typically,  solving \eqref{Eq:ellzerooptim}  is  computationally infeasible. Therefore, other methods were developed and are commonly used. These methods produce sparse estimators and can be implemented relatively fast. We first present here the Lasso,  \citeshort{tibshirani1996regression}, defined as
\begin{equation}
\label{Eq:LasDef}
\bL = \underset{\beta}{\text{argmin}} \Bigl(\frac{1}{n}||Y-X\beta||_2^2+\lambda||\beta||_1\Bigr)
\end{equation}
where $\lambda>0$ is a tuning constant.  For some applications, a different amount of regularization is applied for each predictor. This is done using the weighted Lasso, defined by
\begin{equation}
\label{Eq:WeiLasDef}
\bL_w = \underset{\beta}{\text{argmin}}\Bigl( \frac{1}{n}||Y-X\beta||_2^2+\lambda||w\cdot \beta||_1\Bigr)
\end{equation}
where $w$ is a vector of $p$ weights, $w_j\ge 0$ for all $j$, and $a\cdot b$ is the Hadamard (Schur, entrywise) product of two vectors $a$ and $b$. The Adaptive Lasso, \citeshort{zou2006adaptive}, is one example of using a weighted Lasso type estimator.
Next is the Elastic Net estimator
\begin{equation}
\label{Eq:ENDef}
\bEN = \underset{\beta}{\text{argmin}} \Bigl( \frac{1}{n}||Y-X\beta||_2^2+\lambda_1||\beta||_1+\lambda_2||\beta||^2_2\Bigr).
\end{equation}
This estimator is often described as a compromise between the Lasso and the well known Ridge regression, \citeshort{hoerl1970ridge}, since it could be rewritten as
\begin{equation}
\label{Eq:ENDefAlp}
\bEN = \underset{\beta}{\text{argmin}} \Bigl( \frac{1}{n}||Y-X\beta||_2^2+\lambda\bigl(\alpha||\beta||_1+(1-\alpha)||\beta||^2_2\bigr)\Bigr).
\end{equation}

Let $\hat{\beta}_n$  be a sequence of estimators for $\beta$ and let $\hat{S}_n$ be the sequence of corresponding models.  Model selection consistency is commonly defined as
\begin{equation}
\label{Eq:mod.sel.cons}
\lim\limits_{n\rightarrow \infty} P(\hat{S}_n=S_0)=1.
\end{equation}
If $p\ll n$ and small, then criteria based methods e.g., BIC, \cite{schwarz1978estimating}, can be model selection consistent if $p$ is fixed or if suitable conditions are fulfilled, c.f.,  \citeshort{wang2009shrinkage} and references therein. However, these methods are rarely computationally feasible for large $p$. For $p>n$, it turns out that practically  strong and unverifiable  conditions are needed to achieve \eqref{Eq:mod.sel.cons} for popular regularization based  estimators such as the Lasso: \cite{zhao2006model} and \cite{meinshausen2006high}; the Adaptive Lasso: \citeshort{huang2008adaptive}; the Elastic Net: \cite{jia2010model}; but also for Orthogonal Matching Pursuit (OMP), which is essentially forward selection: \cite{tropp2004greed} and \cite{zhang2009consistency}.

In light of these established results,
we suggest to pursue a different goal. Instead of finding a single model, we suggest to look for a group of models. Each of these models should include low number of predictors, but it should also be capable of  predicting $Y$ well enough.    Finally, $\mathcal{G}=\mathcal{G}(\kappa,\eta)$ is called a minimal class of models of size $\kappa$ and efficiency $\eta$ if
\begin{equation}
\label{Eq:Gdef}
\mathcal{G}=\Bigl\{S: |S|=\kappa \;\&\; \frac{1}{n}||Y-X_S\hat{\beta}_S^{LS} ||_2^2\le \underset{|S'|=\kappa}{\min} \{\frac{1}{n}||Y-X_{S'}\hat{\beta}_{S'}^{LS} ||_2^2\}+\eta\Bigr\}.
\end{equation}
 One could control how similar the models in $\mathcal{G}$ are to each other in terms of prediction, using the tuning parameter $\eta$. A reasonable choice is $\eta=c\sigma^2$ with some $c>0$.  If $\sigma^2$ is unknown, it could be replaced with an estimate, e.g., using the Scaled Lasso, \cite{sun2012scaled}. An alternative to $\mathcal{G}$ is to generate the set of models by simply choosing for each $\kappa$ the $M$ models having  the smallest sample MSE, for some number $M$. The LS estimator, $\hat{\beta}_S^{LS}$,  minimizes the sample prediction error for any model $S$ with size $s \le n$ $\hat{\beta}_S^{LS}$. Thus, this estimator is  used for each of the considered models.

 Note that $\mathcal{G}$ depends on $\kappa$, the desired model size. However, in practice one may want to find $\mathcal{G}$ for a few values of $\kappa$, e.g., $\kappa=1,...,10$, and then to examine the pooled results, $\bigcup_{j=1}^k \mathcal{G}(j,\eta)$. Another option is to replace the Mean Square Error (MSE) $n^{-1}||Y-X_S\hat{\beta}_S^{LS} ||_2^2$ in the definition of $\mathcal{G}$  with one of the available model selection criteria, e.g., AIC ,\citeshort{akaike1974new}, BIC or Lasso. Note that we are interested in situations where there are fair models with a relatively very small number of explanatory variables out of the $p$ available.

At this point, a natural question is how can we benefit from using a minimal class of models.  Examining the models in $\mathcal{G}$ may allow us to derive conclusions regarding the importance of different explanatory variables. If, for example, a variable appears in all the models that belong to $\mathcal{G}$, we may infer that it is essential for prediction of $Y$, and cannot be replaced. We demonstrate this kind of analysis in Section \ref{SubSec:data}.

Another possibility is to use one out of the many aggregation of models methods estimates. Aggregation of estimates obtained by different models was suggested both for the frequentist, \citeshort{hjort2003frequentist}, and for the Bayesian, \citeshort{hoeting1999bayesian}. The well known ``Bagging'', \citeshort{breiman1996bagging}, is also a technique to combine results from various models. Averaging across estimates obtained by multiple models is usually carried out to account for the uncertainty in the model selection process. We, however, are not interested in improving prediction per se, but in identifying good models. Nor are we  interested in identifying the best model, since this is not possible in our setup, but in identifying variables that are potentially relevant and important.

\subsection{Relation to other work}
\label{SubSec:Relation}
 A similar point of view on the relevance of a variable was given by \cite{bickel2012discussion}. They considered a variable to be important  if its relative contribution to the predictive power of a set of variables is high enough. Their next step was to consider only specific type of sets, such that their prediction error is high, yet they do not contain too many variables.

\cite{rigollet2012sparse} investigated the relevant question of prediction under minimal conditions. They showed that linear aggregation of  estimators is beneficial for high-dimensional regression when assuming sparsity of the number of estimators included in the aggregation. They also showed that choosing exponential weights for the aggregation corresponds to minimizing a specific, yet relevant, penalized problem. Their estimator, however, is computationally impossible and they have little interest in variables and model identification.

As described in Section \ref{Sec:algo}, our suggested search algorithm for candidate models travels through the model space. We choose to use simulated annealing to prevent the algorithm from getting stuck in a local minimum.  Various Bayesian model selection procedures consists moves along the model space, usually using a relevant posterior distribution, cf. \cite{o2009review}. We, however, do not assume any prior distribution for the coefficient values. Our use of the algorithm is only as a search mechanism, simply to find as many as possible models that belong to $\mathcal{G}$. Convergence properties of the classical simulated annealing algorithm are not of interest to our use of it. We are interested in the path generated by the algorithm and not in its final state.

\section{A search algorithm}
\label{Sec:algo}
\subsection{Simulated annealing algorithm}
\label{SubSec:annealAlgo}
In this section, we suggest an algorithm to find $\mathcal{G}$ for a given $\kappa$ and $\eta$.  The problem is that $||Y-X_S\beta_S^{LS}||_2^2$ is unknown for all $S$, and since $p$ is large, even for a relatively small $\kappa$, the number of possible models is huge (e.g., for $p=200, k=4$ there are almost 65 million possible models). We therefore suggest to focus our attention on smaller set of models,  denoted by $\mathcal{M}(\kappa)$. $\mathcal{M}$ is a large set of models,  but not too large so we can calculate MSEs for all the models within $\mathcal{M}$ in a reasonable computer running time. Once we have $\mathcal{M}$ and the corresponding MSEs, we can form $\mathcal{G}$ by choosing the relevant models out of $\mathcal{M}$.

The remaining question is how to assemble $\mathcal{M}$ for a given $\kappa$. Any greedy algorithm is bound to find models that are all very similar. Our purpose is to find models that are similar in their predictive power, but heterogeneous in their structure.

Our approach therefore is to implement a search algorithm which travels between potentially attractive models. We use a simulated annealing algorithm. The simulated annealing algorithm was suggested for function optimization by \citeshort{kirkpatrick1983optimization}. The maximizer of a function $f(\theta)$ is of interest. Let $T=(t_1,t_2,...,t_R)$ be a decreasing set of positive ``temperatures''. For every temperature level $t\in T$, iterative steps are carried out, before moving to the next, lower, temperature level. In each step, a random suggested move from the current $\theta$ to another $\theta'\ne \theta$ is generated. The move is then accepted with a probability that depends on the ratio ${\exp\bigl[\bigl(f(\theta')-f(\theta)\bigr)/t\bigr]}$. Typically, although not necessarily, a Metroplis-Hastings criterion, \citeshort{metropolis1953equation,hastings1970monte}, is used to decide whether to accept the suggested move $\theta'$ or to stay at $\theta$.  Then, after a predetermined number of iterations $N_t$, we move to the next $t'<t$ in $T$, taking the final state in temperature $t$ as the initial state for $t'$. The motivation for using this algorithm is that for high ``temperatures'', moves that do not improve the target function are possible, so the algorithm does not get stuck in a small area of the parameter space. However, as we lower the temperature, the decision to move to a suggested point is based almost solely on the criterion of improvement in the target function value. The name of the algorithm and its motivation come from annealing in metallurgy (or glass processing), where a strained piece of metal is heated, so that a reorganization of its atoms is possible, and then it colds off so the atoms can settle down in low energy position. See \cite{brooks1995optimization} for a general review of simulated annealing in the context of statistical problems.

In our case, the parameter of interest is $\beta$, or more precisely, the model $S$. The objective function, that we wish to maximize, is
$$
f(S)=-\frac{1}{n}||Y-X_S\hat{\beta}_S^{LS}||^2_2.
$$
We now describe the proposed algorithm in more detail. We use simulated annealing with Metropolis-Hastings acceptance criterion as a search mechanism for good models. That is, we are not looking for the settling point of the algorithm, but we follow its path, hope that much of it will be in neighborhood of good models, and find the best models along the path.

We say the algorithm is in step $(t,i)$ if the current temperature is $t\in T$ and the current iteration in this temperature is $i\in \{1,...,N_t\}$. For simplicity, we describe here the algorithm for $N_t=N$ for all $t$.
Let $S_t^i$ and $\hat{\beta}^i_t$ be the model and the corresponding least square estimator in the beginning of the state $(t,i)$, respectively. An iteration includes a suggested model $S^{i+}_t$, a least square estimator for this model, $\hat{\beta}_t^{i+}$, and a decision whether to move to $S^{i+}_t$ and  $\hat{\beta}_t^{i+}$ or to stay at $S_t^i$ and $\hat{\beta}^i_t$. We now need to define how $S^{i+}_t$ is suggested and what is the probability of accepting this move.

For each $S^i_t$, we suggest $S^{i+}_t$ by a minor change, i.e., we take one variable out and we add another in, and then obtain $\hat{\beta}_t^{i+}$ by standard linear regression.
Assume that for every variable $j \in (1,...,p)$ we have a score $\gamma_j$, such that higher value of $\gamma_j$ reflects that the variable $j$ should be included in a model, comparing with other possible variables. WLOG, assume $0\le \gamma_j\le 1$ for all $j$.
We choose a variable $r^*\in S_t^i$ and take it out with the probability function
\begin{equation}
\label{Eq:ProbOut}
p^{out}_{i,r}=\frac{{\gamma_r}^{-1}}{\sum\limits_{u\in S_t^i}{\gamma_u}^{-1}}, \hspace{5 mm} \forall r\in S_t^i.
\end{equation}
Next, we choose a variable $\ell^* \notin S^i_t$ and add it to the model with the probability function
\begin{equation}
\label{Eq:ProbIn}
p^{in}_{i,\ell}=\frac{\gamma_\ell}{\sum\limits_{u\notin S_t^i}\gamma_u}, \hspace{5 mm} \forall \ell\notin S_t^i.
\end{equation}
Thus,
\begin{equation*}
S^{i+}_t = \{S^{i}_t\setminus r^*\} \cup  \{\ell^*\}
\end{equation*}
and we may calculate the LS solution $\hat{\beta}_t^{i+}$ for the model $S^{i+}_t$.
The first part of our iteration is over. A potential candidate was chosen. The second part is the decision whether to move to the new point or to stay at the current point.  Following the scheme of simulated annealing algorithm with Metropolis-Hastings criterion we calculate
 \begin{equation*}
 q = \exp\Bigl(\frac1{nt}\bigl(||Y-X_{S_t^{i}}\hat{\beta}_t^{i}||^2_2-||Y-X_{S_t^{i+}} \hat{\beta}_t^{i+}||^2_2\bigr)\Bigr)\frac{p(S^{i+}_t\rightarrow S^i_t)}{p(S^i_t\rightarrow S^{i+}_t)}
 \end{equation*}
where
\begin{equation}
\label{Eq:prob.model.inv}
\begin{split}
p(S^i_t\rightarrow S^{i+}_t)&=p^{out}_{i,r^*}\; p^{in}_{i,\ell^*}
\\
p(S^{i+}_t\rightarrow S^i_t)&=p^{out}_{i^+,\ell^*}\; p^{in}_{i^+,r^*}.
\end{split}
\end{equation}
We are now ready to the next iteration $i+1$ by setting
\begin{equation*}
(S_t^{i+1},\hat{\beta}_t^{i+}) = \left\{
		\begin{array}{ll}
		(S_t^{i+} ,\hat{\beta}_t^{i+})& \quad w.p \quad \min(1,q)\\
		(S_t^i ,\hat{\beta}_t^i) & \quad w.p \quad \max(0,1-q).\\
		\end{array}
\right.
\end{equation*}
Along the run of the algorithm, the suggested models and their corresponding MSEs are kept. These models are used to form $\mathcal{M}(\kappa)$, and $\mathcal{G}$ can be then identified for a given value of $\eta$.

We point out now several issues that should be considered when using the algorithm. First, the algorithm was described above for one single value of $\kappa$. In practice, one may run the algorithm separately for different values of $\kappa$. Another consideration is the tuning parameters of the algorithm that are provided by the user: The temperatures $T$; the number of iterations $N$; the starting point $S^{1}_{t_1}$; and the vector $\gamma=(\gamma_1,...,\gamma_p)$. Our empirical experience is that the first three can be managed without too many concerns; see Section \ref{Sec:numerRes}. Regarding the vector $\gamma$, a wise choice of this vector should improve the chance of the algorithm to move in desired directions. We deal with this question in Section \ref{SubSec:gamma}. However, in what follows we show that, under suitable conditions, the algorithm can work well even with a general choice of $\gamma$.

Define $S_0, s_0$ and $\bz$ as before and let $\mu=X\bz$. That is,  $Y=\mu+\epsilon$.
We first introduce a few simple and common assumptions:
\begin{Assumptions}
\item $ ||\mu||_2^2=\O(n)$
\assumption $s_0$ is small, i.e., $s_0=\O(1)$.
\assumption $p=n^a, a>1$
\assumption $\epsilon \sim N_n(0,\sigma^2I)$
\end{Assumptions}
Denote $A_\gamma$ for the set of positive entries in $\gamma$. That is, $A_\gamma \subseteq \{1,.2,...p\} $ is a (potentially) smaller group of predictors than all the $p$ variables. Denote  also $h_\gamma=|A_\gamma|$ for the size of $A_\gamma$ and $\gamma_{min}:=\min\limits_{i \in A_\gamma}\gamma_i$ for the lowest positive entry in $\gamma$.

Informally, the algorithm is expected to preform reasonably well if:
\begin{enumerate}[1.]
\item The true model is relatively small (e.g., with 10 active variables).
\item  A variable in the true model is adding to the prediction of a set of variables if a very few (e.g., 2) other variables are in the set.
\end{enumerate}

Our next assumption is more restrictive. Let $\bar{S}$ be an interesting model with size $s_0$---a model with not too many predictors  and with a low MSE. The models we are looking for are of this nature. We facilitate the idea of $\bar{S}$ being an interesting model by assuming that $X_{\bar{S}}\hat{\beta}_{\bar{S}}$ is close to $\mu$ (in the asymptotic sense). We virtually assume that for every model with $s=\bar{s}=|\bar{S}|$, which is not $\bar{S}$, if we take out a predictor that is not part of $\bar{S}$, and replace it with a predictor from $\bar{S}$, the subspace spanned by the new model is not much further from $\mu$, comparing with the subspace spanned by the original model. Formally, denote $\mathcal{P_S}$ for the projection matrix onto the subspace spanned by the columns of the submatrix $X_S$.
\def\assName{{\bf\rm B}}
\setcounter{Assumption}{-1}
\begin{Assumptions}
  \assumption  There exist $t_0>0$ and a constant $c>0$, such that for all $S$, $|S|=s_0-1$, for all $j\in \bar{S} \cap S^c$, $j' \in {\bar{S}^c} \cap S^c$,  and for a large enough $n$
\begin{equation}
\label{Eq:CondB1}
\frac{1}{n}\left[||\mathcal{P}_{S^\star_j} \mu||_2^2-||\mathcal{P}_{S^\star_{j'}}\mu||_2^2\right] > 4t_0\log c,
\end{equation}
where $S^\star_r \equiv S \cup \{r\}$.
\end{Assumptions}
We note that since $c$ could be lower than one the right hand side of \eqref{Eq:CondB1} can be negative. The following theorem gives conditions under which the simulated annealing algorithm is passing through an interesting model $\bar{S}$. More accurately, the theorem states that there is always strictly positive probability to pass through $\bar{S}$ in the next few moves. This result should apply for all models that Assumption (B1) holds for. Note however, that we do not claim that the algorithm finds all the models in a minimal class. Proving such a result would probably require complicated assumptions on models with larger size than $s_0$, and their relation to $\bar{S}$ and other interesting models.

 Let $P_t^m(S'|S)$ be the probability of passing through model $S'$ in the next $m$ iterations of the algorithm,  given the current temperature is $t$, and the current state of the algorithm is the model $S$.
\begin{theorem}\label{Thm:Algo}
Consider the simulated annealing algorithm with $\kappa=s_0$ and with a $\gamma$ vector such that $\gamma_{min}\ge c_\gamma$. Let Assumptions (A1)-(A4) hold and  let Assumption (B1) hold for some temperature $t_0$ and with $c=c_\gamma$. If $\bar{S} \subseteq A_\gamma$ then for all $S \subseteq A_\gamma$ with $s=s_0$, for all $m\ge s_0-|\bar{S} \cap S|$ and for large enough $n$,
\begin{equation}
\label{Eq:thm}
P_{t_0}^m(\bar{S}|S) > \left[\frac{c^2_\gamma}{s_0(h_\gamma-s_0)}\right]^{s_0}.
\end{equation}
\end{theorem}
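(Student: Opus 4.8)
The plan is to exhibit one explicit path of single-variable swaps from $S$ to $\bar S$ and to lower bound the probability that the algorithm follows it; since we only need to \emph{pass through} $\bar S$, it suffices to control its first $d$ steps. As $|S|=|\bar S|=s_0$, the two models differ in exactly $d:=s_0-|\bar S\cap S|$ positions. Write $W=S\setminus\bar S$ for the ``wrong'' coordinates and $R=\bar S\setminus S$ for the ``right'' ones, so $|W|=|R|=d\le s_0\le m$. I would fix a sequence $S=S^{(0)},S^{(1)},\dots,S^{(d)}=\bar S$ in which $S^{(k+1)}$ is obtained from $S^{(k)}$ by deleting a coordinate $j'\in W$ still present and inserting a coordinate $j\in R$ still missing. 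Because every inserted variable lies in $\bar S\subseteq A_\gamma$ and $S\subseteq A_\gamma$, all intermediate models stay inside $A_\gamma$; this invariant is exactly what keeps the ``in'' denominators at most $h_\gamma-s_0$.

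First I would bound the one-step proposal probability. For the deletion of $j'$ the numerator $\gamma_{j'}^{-1}\ge 1$ (as $\gamma_{j'}\le 1$) while $\sum_{u\in S^{(k)}}\gamma_u^{-1}\le s_0/c_\gamma$, so $p^{out}_{j'}\ge c_\gamma/s_0$; for the insertion of $j$ the numerator $\gamma_j\ge c_\gamma$ while $\sum_{u\notin S^{(k)}}\gamma_u\le h_\gamma-s_0$, so $p^{in}_j\ge c_\gamma/(h_\gamma-s_0)$. Hence each forward proposal $p(S^{(k)}\to S^{(k+1)})$, and by the identical computation each reverse proposal $p(S^{(k+1)}\to S^{(k)})$, is at least $c_\gamma^2/[s_0(h_\gamma-s_0)]$, the quantity inside the brackets of \eqref{Eq:thm}; call it $\tau$.

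Next I would handle acceptance. The probability of both proposing and accepting the $k$-th swap equals $\min\!\bigl(p(S^{(k)}\to S^{(k+1)}),\,e^{E_k}\,p(S^{(k+1)}\to S^{(k)})\bigr)$, where $E_k=\frac{1}{nt_0}\bigl(\|Y-X_{S^{(k)}}\hat\beta^{(k)}\|_2^2-\|Y-X_{S^{(k+1)}}\hat\beta^{(k+1)}\|_2^2\bigr)$. Since the least-squares fit is a projection, $E_k=\frac{1}{nt_0}\bigl(\|\mathcal P_{S^{(k+1)}}Y\|_2^2-\|\mathcal P_{S^{(k)}}Y\|_2^2\bigr)$, and with $\tilde S=S^{(k)}\setminus\{j'\}$ of size $s_0-1$ this is precisely the configuration of Assumption (B1). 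Substituting $Y=\mu+\epsilon$ splits $E_k$ into the deterministic gap $\frac{1}{nt_0}\bigl(\|\mathcal P_{\tilde S\cup j}\mu\|_2^2-\|\mathcal P_{\tilde S\cup j'}\mu\|_2^2\bigr)$, bounded below by $4\log c_\gamma$ through (B1), plus cross and quadratic noise terms. Under (A4) these noise terms are, uniformly over the at most $\binom{p}{s_0}=O(n^{a s_0})$ size-$s_0$ models, of order $\sqrt{\log n/n}$ and $\log n/n$ respectively---using (A1) to bound $\|\mu\|_2=O(\sqrt n)$, a chi-square tail bound for $\|\mathcal P_S\epsilon\|_2^2$, and a union bound licensed by (A2)--(A3). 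Thus for large $n$ each $E_k$ is within $o(1)$ of its deterministic value, and multiplying the $d\le s_0$ steps with $\tau\le 1$ yields $P_{t_0}^m(\bar S\mid S)\ge\tau^{d}\ge\tau^{s_0}$; the strict inequality of \eqref{Eq:thm} follows from the slack present when $d<s_0$ and from this path being only one of several admissible routes to $\bar S$.

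The main obstacle is the acceptance step. Because the right-hand side of (B1) may be negative, one cannot simply argue that every proposed swap is accepted outright; the danger is that the Metropolis--Hastings acceptance factor $e^{E_k}$ combines with the proposal ratio to pull the combined one-step probability below $\tau$, introducing spurious powers of $c_\gamma$. Reconciling the proposal ratio with the likelihood ratio so that each step retains probability exactly $\tau$ is where the constant $4$ and the choice $c=c_\gamma$ in (B1) must be exploited sharply, and it may require ordering the swaps so that the projected fit is nondecreasing along the path (keeping each $E_k\ge 0$), together with the uniform noise control above. By contrast, the proposal-probability computation and the Gaussian concentration estimates are comparatively routine.
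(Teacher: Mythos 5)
Your proposal follows essentially the same route as the paper's own proof: fix a path of single swaps from $S$ to $\bar S$, lower bound each proposal probability by $c_\gamma^2/[s_0(h_\gamma-s_0)]$ exactly as in \eqref{Eq:bounds}, and handle acceptance by splitting the Metropolis--Hastings exponent into a deterministic projection gap controlled by Assumption (B1) plus Gaussian noise terms that are $o_p(1)$ uniformly over size-$s_0$ models. The paper packages the uniform noise control as Lemma \ref{Lem:submodel} and applies it twice via the enlarged model $S''=S\cup S'$, while you argue directly with chi-square tails and a union bound over $O(n^{as_0})$ models; the substance is identical.

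The one point where you diverge is instructive: the ``main obstacle'' you flag in the acceptance step is real, and the paper does not overcome it either --- it silently asserts it away. From the paper's own displayed bounds, the one-step probability is $Q_{t_0}=\min\bigl[P(S\to S'),\,e^{E}\,P(S'\to S)\bigr]$ with $e^{E}\ge c_\gamma^{4}$ (from \eqref{Eq:minlargeone}, a factor that is at most $1$ since $c_\gamma\le 1$) and $P(S'\to S)\ge c_\gamma^{2}/[s_0(h_\gamma-s_0)]$; this yields only $Q_{t_0}\ge c_\gamma^{6}/[s_0(h_\gamma-s_0)]$, not the $c_\gamma^{2}/[s_0(h_\gamma-s_0)]$ claimed when substituting \eqref{Eq:bounds} and \eqref{Eq:minlargeone} into \eqref{Eq:probQ}, so the bound \eqref{Eq:thm} as stated acquires extra powers of $c_\gamma$ per step. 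The missing step would require $e^{E}$ times the proposal ratio to be at least $1$, which is exactly what one gets if (B1) is read with $4t_0\log(1/c)$ on the right-hand side of \eqref{Eq:CondB1} instead of $4t_0\log c$ --- i.e.\ the constant $4$ is designed to cancel the worst-case proposal-ratio penalty $c_\gamma^4$, but with the sign as printed the cancellation runs the wrong way. Your proposed rescue --- reordering the swaps so that every $E_k\ge 0$ --- cannot work in general: when $d=1$ there is a single one-step path and (B1) still permits that step to worsen the fit. So the honest conclusions are either (i) accept the weaker bound $[c_\gamma^{6}/(s_0(h_\gamma-s_0))]^{s_0}$, which your argument (and the paper's) does establish, or (ii) strengthen (B1) to have positive right-hand side, in which case every proposed swap along the path is accepted with probability one and \eqref{Eq:thm} follows exactly as you outline. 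Apart from this constant bookkeeping --- on which you are more careful than the paper --- your proof is the paper's proof.
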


A proof is given in the appendix.
Theorem \ref{Thm:Algo} states that for any choice of the vector $\gamma$ such that the entries in $\gamma$ are positive for all the predictors in $\bar{S}$, the probability that the algorithm would visit a $\bar{S}$ in the next $m$ moves is always positive, provided the temperature is high enough, and provided it is possible to move from the current model to $\bar{S}$ in $m$ moves. Recall that our intention here is to use the algorithm as a search algorithm for several models.

For the classical model selection setting with $p<n$, a similar method was suggested by \citeshort{brooks2003classical}. Their motivation is as follows. When searching for the most appropriate model, likelihood based criteria are often used. However, maximizing the likelihood to get parameters estimates for each model becomes infeasible as the number of possible models increases. They therefore suggest to simplify the process by maximizing simultaneously over the parameter space and the model space. They suggest a simulated annealing type algorithm to implement this optimization. The algorithm \citeshort{brooks2003classical} suggested is essentially  an automatic model selection procedure.

\subsection{Choosing $\gamma$}
\label{SubSec:gamma}
The simulated annealing algorithm described above is provided with the vector $\gamma$. The values $\gamma_1,...,\gamma_p$ should represent the knowledge regarding the importance of the predictors, although  we do not assume that any prior knowledge is available. As it can be seen in equations \eqref{Eq:ProbOut}-\eqref{Eq:ProbIn}, predictors with high $\gamma$ values have larger probability to enter the model if they are not part of the current model, and lower probability to be suggested for replacement if they are already part of it. Since $p$ is large, we may also benefit if $\gamma$ includes many zeros.

One simple choice of $\gamma$ is to take the absolute values of the univariate correlations of the different predictors with $Y$. We  could also threshold the correlations in order to keep only predictors having large enough correlation (in absolute value) with $Y$. However,  using univariate correlations is clearly problematic since it overlooks the covariance structure of the predictors in $X$.

 Another possibility is to first use the Lasso with a relatively low penalty, and then to set $\gamma_j=|\bL_j|/||\bL||_1$. The idea behind this suggestion is that predictors with large coefficient value may be more important for prediction of $Y$.

However, as discussed in Section \ref{Sec:descrip}, the Lasso might miss some potentially good predictors. It is well known that the Elastic Net may add these predictors to the solution, although it might also add unnecessary predictors. Moreover, it is not clear how to choose $\gamma_j$ using solely the Elastic Net. The Lasso and the Elastic Net estimators are not model selection consistent in many situations. However, for our purpose, combining both methods together may help us get a reservoir of promising predictors.

\citeshort{zou2005regularization} provided  motivation and results that justify the common knowledge that the Elastic Net is better to use with correlated predictors. Since we intend to exploit this property of the Elastic Net, this paper offers an additional theoretical background. We present a more general result later on this section, but for now, the following proposition demonstrates why the Elastic Net tends to include correlated predictors in its model.
\begin{proposition}
\label{Prop:ENsimple}
Define $X$  and $Y$ as before, and define $\bEN$ by \eqref{Eq:ENDef}.
Denote $\rho=(X^{(1)})^T X^{(2)}$.
Assume $|\bEN_1|\ge c_\beta$ for some $c_\beta>0$. If $|\rho|>1-\lambda_2^2c_\beta^2/||Y||^2_2$ then $|\hat{\beta}^{EN}_2|>0$.
\end{proposition}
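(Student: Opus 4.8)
The plan is to argue by contradiction using the subgradient (KKT) optimality conditions of the Elastic Net. Since $\lambda_2>0$, the objective in \eqref{Eq:ENDef} is strictly convex, so $\bEN$ is the unique minimizer and the first-order stationarity conditions are both necessary and sufficient. Writing $r=Y-X\bEN$ for the residual, stationarity in coordinate $j$ reads $-\frac2n (X^{(j)})^T r + 2\lambda_2\bEN_j + \lambda_1 z_j = 0$, where $z_j=\mathrm{sign}(\bEN_j)$ when $\bEN_j\ne0$ and $z_j\in[-1,1]$ when $\bEN_j=0$. Suppose, toward a contradiction, that $\bEN_2=0$. Coordinate $2$ then yields the slackness inequality $\frac2n|(X^{(2)})^T r|\le \lambda_1$, while coordinate $1$, which is active because $|\bEN_1|\ge c_\beta>0$, yields the equality $\mathrm{sign}(\bEN_1)\frac2n (X^{(1)})^T r = 2\lambda_2|\bEN_1|+\lambda_1 \ge 2\lambda_2 c_\beta + \lambda_1$.

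The key step is to combine these two conditions so that the $\lambda_1$ contributions cancel and only the $\lambda_2 c_\beta$ term survives. I would test the residual against the single direction $v=\mathrm{sign}(\bEN_1)\bigl(X^{(1)}-\mathrm{sign}(\rho)X^{(2)}\bigr)$. On the one hand, $\frac2n v^T r = \mathrm{sign}(\bEN_1)\frac2n(X^{(1)})^T r - \mathrm{sign}(\bEN_1)\mathrm{sign}(\rho)\frac2n(X^{(2)})^T r \ge (2\lambda_2 c_\beta+\lambda_1)-\lambda_1 = 2\lambda_2 c_\beta$, since the inactive-coordinate bound controls the second term by $\lambda_1$ regardless of its sign. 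On the other hand, because the columns are scaled to unit norm, $\|v\|_2^2 = \|X^{(1)}\|_2^2 - 2\,\mathrm{sign}(\rho)\,(X^{(1)})^T X^{(2)} + \|X^{(2)}\|_2^2 = 2(1-|\rho|)$, so Cauchy--Schwarz gives $\frac2n v^T r \le \frac2n\|v\|_2\,\|r\|_2 = \frac2n\sqrt{2(1-|\rho|)}\,\|r\|_2$.

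It remains to control the residual norm, which I would do by comparing the objective at $\bEN$ with its value at $\beta=0$: since $\bEN$ is the minimizer, $\frac1n\|r\|_2^2 \le \frac1n\|r\|_2^2+\lambda_1\|\bEN\|_1+\lambda_2\|\bEN\|_2^2 \le \frac1n\|Y\|_2^2$, hence $\|r\|_2\le\|Y\|_2$. Chaining the last three estimates gives $\lambda_2 c_\beta \le \sqrt{2(1-|\rho|)}\,\|Y\|_2$ (up to the normalization constants carried by the definition of the loss), equivalently $1-|\rho|\ge (\text{const})\,\lambda_2^2 c_\beta^2/\|Y\|_2^2$. This contradicts the hypothesis $|\rho|>1-\lambda_2^2 c_\beta^2/\|Y\|_2^2$, and therefore $\bEN_2\ne0$.

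The only genuinely delicate point is the sign bookkeeping. Choosing the test direction as $\mathrm{sign}(\bEN_1)\bigl(X^{(1)}-\mathrm{sign}(\rho)X^{(2)}\bigr)$ is what simultaneously makes the active-coordinate term positive and equal to $2\lambda_2|\bEN_1|+\lambda_1$ and makes $\|v\|_2^2$ collapse to $2(1-|\rho|)$; a naive orthogonal decomposition $X^{(2)}=\rho X^{(1)}+w$ instead produces the weaker factor $\sqrt{1-\rho^2}$ together with a leftover $\lambda_1$ term, which does not line up with the stated threshold. I therefore expect the only real work to lie in verifying that this particular combination cancels the $\ell_1$ subgradient terms and reproduces the precise constant appearing in the hypothesis.
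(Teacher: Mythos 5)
Your argument is correct in substance, but it is a genuinely different route from the paper's. The paper never touches the KKT conditions: it restricts the Elastic Net objective to the one-parameter family $\mathcal{B}$ of \eqref{Eq:subspaceB}, which transfers coefficient mass from predictor $1$ to predictor $2$ via $\beta_1=\tau\hat{\beta}_1$, $\beta_2=(1-\tau)\hat{\beta}_1$. Along this family the $\ell_1$ penalty is constant for $\tau\in[0,1]$, so the restricted objective is a quadratic in $\tau$; the paper computes its minimizer $\tau^*=\frac12-\frac{\tilde{Y}^T(X^{(2)}-X^{(1)})}{2\hat{\beta}_1(\lambda_2+1-\rho)}$ explicitly and shows the correlation hypothesis forces $\tau^*\in(0,1)$, contradicting global optimality of a solution with $\hat{\beta}_2=0$ (which sits at $\tau=1$). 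Your subgradient-plus-test-direction argument buys three things the paper's proof does not. First, it is exact: the paper's derivation ends with a case analysis on $\lambda_2$ versus $u^2/2$ and a Taylor approximation, so it only establishes the threshold approximately. Second, the $\mathrm{sign}(\rho)$ factor in your test direction treats negative correlation symmetrically, so you actually prove the statement with $|\rho|$ as claimed, whereas the paper's chain of inequalities (built on $\|X^{(2)}-X^{(1)}\|_2^2=2(1-\rho)$) is only meaningful for $\rho$ near $+1$. Third, your residual bound $\|r\|_2\le\|Y\|_2$ follows trivially by comparing the objective at $\bEN$ with its value at $\beta=0$, while the analogous step in the paper, the assertion $\|\tilde{Y}\|_2\le\|Y\|_2$ for $\tilde{Y}=Y-X_{-(12)}\hat{\beta}_{-(12)}$, is declared ``known'' but never justified. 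What the paper's route buys in exchange is the explicit formula for $\tau^*$, whose interpretive value (e.g., $\tau^*=\tfrac12$ for identical predictors) is used later to motivate the construction of $\gamma$, and a perturbation template that generalizes directly to the competing-models result, Theorem \ref{Thm:ENgroups}, where the same family $\mathcal{B}$ reappears in block form.

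One caveat on constants, which you flagged yourself. Under the normalization of \eqref{Eq:ENDef} (loss scaled by $1/n$, unit-norm columns), chaining your inequalities gives $1-|\rho|\ge n^2\lambda_2^2c_\beta^2/(2\|Y\|_2^2)$, which contradicts the hypothesis whenever $n\ge 2$, so your proof closes. Under the normalization the paper actually uses in its own proof (\eqref{Eq:elasticfunc}, with no $1/n$), your argument yields the threshold $1-\lambda_2^2c_\beta^2/(2\|Y\|_2^2)$, a factor of $2$ stronger than the stated hypothesis; but the paper's own proof arrives at exactly the same threshold $\rho>1-\lambda_2^2/(2u^2)$, so this slack relative to the proposition as stated is the paper's, not yours.
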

A proof is given in the appendix. Proposition \ref{Prop:ENsimple} gives motivation for why $\bEN$ has typically a larger model than $\bL$. It also quantifies how much correlated two predictors need to be so the Elastic Net would either include both predictors or none of them.

Going back to our $\gamma$ vector, the next question is how to use the Lasso and the Elastic Net in order to assign  ``scores'' to each predictor.
Let $S_L$ and $S_{EN}$ be the models that correspond to $\bL$ and $\bEN$, respectively.  Define $S_{+}$ for the group of predictors that were part of the Elastic Net model but not part of the Lasso  model and $S_{out}$ for the predictors that were not included in any of them. Note that $S_L\cap S_{+} = S_L\cap S_{out}= S_{+}\cap S_{out}=\emptyset$ and $S_L\cup S_{+} \cup S_{out}$ is $\{1,...,p\}$. Define
\begin{equation*}
\bL_{+}(\delta)=\argmin_\beta\Bigl(\frac{1}{n}||Y-X\beta||_2^2+\lambda\sum\limits_{j=1}^{p}\delta^{\mathbbm{1}\{j\in S_{+}\}}|\beta_j|\Bigr), \qquad \delta \in (0,1),
\end{equation*}
and let $S_{+}^L(\delta)$ be the appropriate model. In this procedure, a reduced penalty is given for predictors that $\bL$ might have missed. Thus, these predictors are encouraged to enter the model, and since they may take the place of others, predictors in $S_L$ that their explanation power is not high enough are pushed out of the model. Note that $\bL_{+}(\delta)$ is a special case of $\bL_w$, as defined in \eqref{Eq:WeiLasDef}, with $w_j=\delta^{\mathbbm{1}\{j\in S_{+}\}}$.

We demonstrate how the reduced penalty procedure works using a toy example. A data set with $n=30$ and $p=50$ is simulated. The true value of $\beta$ is taken to be $\bz=(0.5\;0.5\;1\;1\;1\;0\; 0 \;...\; 0)^T$ and $\sigma^2$ is taken to be one. The predictors are independent normal variables with the exception of 0.8 correlation between $X^{(1)}$ and $X^{(2)}$. Predictor 1 is included in the Lasso model, however predictor 2 is not. Figure \ref{Fig:toy} presents the coefficients' estimates of $X^{(1)},X^{(2)}$ and $X^{(3)}$ when lowering the penalty of $X^{(2)}$. Note how $X^{(2)}$ enters the model for low enough penalty while $X^{(1)}$ leaves the model for low enough penalty (on $X^{(2)}$).

\begin{figure}[htbp]
\centering
\includegraphics[width = 0.6\textwidth]{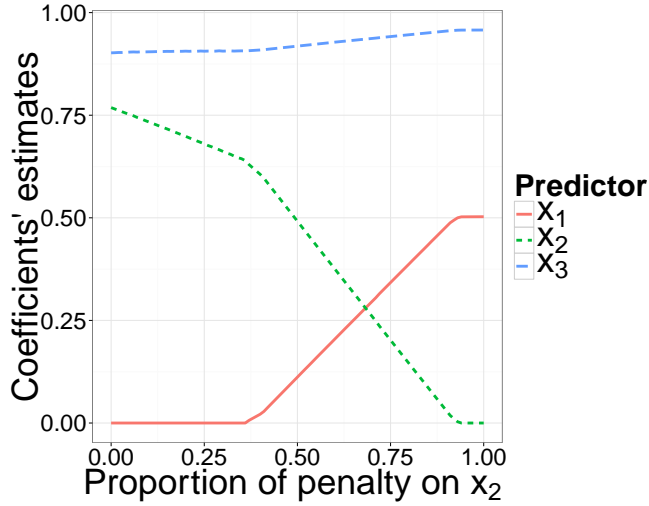}
\caption{Toy example: coefficients' estimates for
\label{Fig:toy}
predictors $X^{(1)},X^{(2)}$ and $X^{(3)}$ when lowering the Lasso penalty for $X^{(2)}$ only. The rightmost point corresponds to a Lasso procedure with equal penalties for all predictors}
\end{figure}

We suggest to measure the importance of a predictor $j\in S_{+}$ by the highest $\delta$ such that $j\in S_{+}^L(\delta)$. On the other hand,  the importance of a predictor $j'\in S_L$, can be measured by the highest $\delta$ such that $j'\notin S_{+}^L(\delta)$ (now, smaller $\delta$ reflects $j'$ is more important). With this in our mind, we continue to the derivation of $\gamma$.

Let $\Delta=(\delta_0<\delta_1<...<\delta_h)$ be some grid of $[0,1]$, with $\delta_0=0$ and $\delta_h=1$. For each $\delta\in \Delta$, we obtain $\bL_{+}(\delta)$. Define
\begin{equation*}
i^\star_j = \left\{
		\begin{array}{ll}
		\underset{i}{\text{argmax}}\{i:\hat{\beta}_{+}^L(\delta_i)_j\ne 0\} & \quad j \notin S_L\\
		\underset{i}{\text{argmax}}\{i:\hat{\beta}_{+}^L(\delta_i)_j=0\}& \quad j \in S_L\\
		\end{array}
\right.
\end{equation*}
and if the $\argmax$ is over an empty set, define $i^\star_j=0$. Let $\delta^j:=\delta_{i^\star_j}$. Now, we suggest to choose $\gamma_j$ as follows:
\begin{equation*}
\gamma_j = \left\{
        \begin{array}{cl}
            0 & \quad j\in S_{out} \\
            \frac{\delta^j}{2} & \quad j\in S_{+}\\
            \\

            1-\frac{\delta^j}{2} & \quad j \in S_{L},
        \end{array}
    \right.
\end{equation*}
for all $j \in \{1,...,p\}$. This choice of $\gamma$ has the following nice properties.
\begin{itemize}
\item A predictor $j\notin S_L$ with $i^\star_j=0$ is excluded from consideration.
\item On the other hand, for a predictor $j\in S_L$, if $i^\star_j=0$ than $\gamma_j=1$, which is the maximal possible value. Even when the penalty for other predictors was dramatically reduced, leading to their entrance to the model, $j$ remains part of the solution and hence it is essential  for prediction of $Y$.
\item  Since predictors in $S_L$ were picked when equal penalty was assigned to all predictors, they get priority over the predictors in $S_{+}$.
\item However, for two identical predictors (or highly correlated predictors)  $X^{(j)}=X^{(j')}$ such that $j\in S_L$ and $j'\notin S_L$, we get a desirable result. By Proposition \ref{Prop:ENsimple} we know that $X_j'\in S_{+}$. Now, for $\delta_{h-1}<1$  it is clear that $j' \in S_{+}^L(\delta_{h-1})$ and $j \notin S_{+}^L(\delta_{h-1})$. Therefore $i^\star_{j}=i^\star_{j'}=h-1$, and hence if $\delta_{h-1}$ is taken to be close to one, then $\gamma_{j}\simeq \gamma_{j'}\simeq 0.5$ as one might want.
\end{itemize}
Proposition \ref{Prop:ENsimple} deals with the case of two correlated predictors. In practice, the covariance structure may be much more complicated. Therefore the question arises: can we say something more general on the Elastic Net in the presence of competing models? Apparently we can. Let $M_1$ and $M_2$ be two models, that is, two sets of predictors, that possibly intersect. Assume that the Elastic Net solution chose all the predictors in $M_1$. What can we say about the predictors in $M_2$? Are there conditions on $X_{M_2}$, $X_{M_1}$ and $Y$ such that all the predictors in $M_2$ are also chosen? If the answer is yes (and it is, as Theorem \ref{Thm:ENgroups} states), it justifies our use of the Elastic Net to reveal more relevant predictors. In our case, the relevant predictors are the building blocks of models in $\mathcal{G}$.

In order to reveal this property of the Elastic Net, we analyze $\bEN$, the solution of \eqref{Eq:ENDef}, when assuming all the predictors in $M_1$ have non-zero values. We denote $M^{(-)}$ for $(M_1\cup M_2)^c$, the set of predictors that are not included in $M_1$ or $M_2$ and $\tilde{X}=X_{M^{(-)}}$ for the appropriate submatrix of $X$.  We let $\bEN_{M_1}$, $\bEN_{M_2}$ and $\tilde{\hat{\beta}}^{EN}$ be the coordinates of $\bEN$ that correspond to  $M_1$, $M_2$ and $(M_1\cup M_2)^c$, respectively. Then, we show that we can concentrate on $\tilde{Y}=Y-\tilde{X}\tilde{\hat{\beta}}^{EN}$, which is the unexplained residual of $Y$, after taking into account $\tilde{X}$. Finally, we show that both $M_1$ and $M_2$ are chosen by the Elastic Net if the prediction of $\tilde{Y}$ using $M_1$, namely $X_{M_1}\bEN_{M_1}$, projected onto the subspace spanned by the columns of $M_2$ is correlated enough with  $\tilde{Y}$. Formally,
\begin{theorem}
\label{Thm:ENgroups}
Define $\bEN$ as before. Let $M_1$ and $M_2$ be two models with the appropriate submatrices $X_{M_1}$ and $X_{M_2}$. Define  $\tilde{X}$ and $\tilde{Y}$ as before. Define $\bEN_{M_1}$ and $\bEN_{M_2}$ as before. Denote $\mathcal{P}_{M_2}$ for the projection matrix onto the subspace spanned by the columns of $X_{M_2}$. WLOG, assume $|M_2|\le|M_1|$ and that all the coordinates of $\bEN_{M_1}$ are different than zero. Finally, if
\begin{equation}
\label{Eq:CondENgroups}
\tilde{Y}^T\mathcal{P}_{M_2}X_{M_1}\bEN_{M_1} > c_1(\lambda_1,\lambda_2,X_{M_1},\tilde{Y},\bEN_{M_1}),
\end{equation}
then all the coordinates of $|\bEN_{M_2}|$ are different than zero.
\end{theorem}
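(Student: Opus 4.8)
The plan is to work entirely from the Karush--Kuhn--Tucker (KKT) optimality conditions of the Elastic Net. Because the objective in \eqref{Eq:ENDef} is strictly convex for $\lambda_2>0$, these conditions are necessary and sufficient, and for each coordinate $j$ they read $\frac{2}{n}(X^{(j)})^T R = \lambda_1 s_j + 2\lambda_2\bEN_j$, where $R = Y - X\bEN$ is the residual, $s_j = \mathrm{sign}(\bEN_j)$ when $\bEN_j\neq 0$, and $s_j\in[-1,1]$ when $\bEN_j=0$; in particular $\|s\|_\infty\le 1$ always. This is exactly the mechanism behind Proposition~\ref{Prop:ENsimple}, which is the special case $|M_1|=|M_2|=1$, so the theorem is its group analogue and I would set it up the same way.

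The first step is the reduction to $\tilde{Y}$. Writing $R = Y - \tilde{X}\tilde{\hat{\beta}}^{EN} - X_{M_1}\bEN_{M_1} - X_{M_2}\bEN_{M_2} = \tilde{Y} - X_{M_1}\bEN_{M_1} - X_{M_2}\bEN_{M_2}$ shows that the KKT equations for the coordinates in $M_1\cup M_2$ involve only the partial residual $\tilde{Y}$ and the block $X_{M_1\cup M_2}$, the nuisance predictors in $M^{(-)}$ having been absorbed. This justifies concentrating on $\tilde{Y}$, and it lets me view $\bEN_{M_2}$ as the Elastic Net fit of the partial target $\tilde{Y}-X_{M_1}\bEN_{M_1}$ on $X_{M_2}$ with the other coordinates held at their optimal values. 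Since $M_1$ and $M_2$ may intersect and the intersection is nonzero by hypothesis, I only need to argue about the coordinates in $M_2\setminus M_1$.

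The main step is the contrapositive. Suppose some coordinate $j_0\in M_2$ has $\bEN_{j_0}=0$. Assembling the $M_2$-block KKT equations into $X_{M_2}^T R = \frac{n}{2}(\lambda_1 s + 2\lambda_2\bEN_{M_2})$ and substituting the residual decomposition, I can solve for $X_{M_2}^T X_{M_1}\bEN_{M_1}$, then left-apply $X_{M_2}(X_{M_2}^T X_{M_2})^{-1}$ and take the inner product with $\tilde{Y}$ to obtain an exact identity for $\tilde{Y}^T\mathcal{P}_{M_2}X_{M_1}\bEN_{M_1}$ expressed through $\tilde{Y}$, $s$, and $\bEN_{M_2}$. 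Using $\|s\|_\infty\le 1$ together with the slack forced by the inactive coordinate $j_0$, whose KKT condition gives $|(X^{(j_0)})^T R|\le \frac{n}{2}\lambda_1$, I would bound this identity from above by a quantity depending only on $\lambda_1,\lambda_2,X_{M_1},\tilde{Y}$ and $\bEN_{M_1}$, which is the constant $c_1$. The hypothesis \eqref{Eq:CondENgroups} then contradicts this bound, so no such $j_0$ exists and every coordinate of $\bEN_{M_2}$ is nonzero.

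The hard part is precisely this last bounding step: the identity for the projected inner product couples the unknown vector $\bEN_{M_2}$ and its subgradient $s$ to $X_{M_2}$ through $(X_{M_2}^T X_{M_2})^{-1}$, yet $c_1$ must be free of $X_{M_2}$ and of $\bEN_{M_2}$. Extracting a clean upper bound therefore amounts to using the subgradient saturation $\|s\|_\infty\le 1$ and the inactivity of $j_0$ to dominate or cancel the $X_{M_2}$-dependent terms, while full column rank of $X_{M_2}$ (needed for $\mathcal{P}_{M_2}$ and the inverse to be well defined) and the orientation $|M_2|\le|M_1|$ keep the projection argument well posed. Pinning down the explicit form of $c_1$ and checking that the inactive-coordinate slack is enough to make the bound sharp is where the real work lies; once that algebra is in place the contradiction closes at once.
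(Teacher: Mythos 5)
Your setup (Elastic Net KKT conditions, reduction to the partial residual $\tilde{Y}$, contrapositive on a zero coordinate of $\bEN_{M_2}$) is coherent, and the reduction to $\tilde{Y}$ does match the paper's first step. But the proposal has a genuine gap: the entire proof lives in the step you yourself defer as ``where the real work lies,'' and as organized that step does not go through. After assembling the $M_2$-block KKT system $X_{M_2}^T R = \tfrac{n}{2}\bigl(\lambda_1 s_{M_2} + 2\lambda_2 \bEN_{M_2}\bigr)$ and left-applying $X_{M_2}(X_{M_2}^TX_{M_2})^{-1}$, the exact identity you obtain is
\begin{equation*}
\tilde{Y}^T\mathcal{P}_{M_2}X_{M_1}\bEN_{M_1}
= \|\mathcal{P}_{M_2}\tilde{Y}\|_2^2
- \tilde{Y}^T X_{M_2}\bEN_{M_2}
- \tfrac{n}{2}\,\tilde{Y}^T X_{M_2}(X_{M_2}^TX_{M_2})^{-1}\bigl(\lambda_1 s_{M_2}+2\lambda_2\bEN_{M_2}\bigr),
\end{equation*}
and the last two terms cannot be dominated by anything free of $X_{M_2}$ and $\bEN_{M_2}$: the Gram-inverse term scales like $1/\sigma_{\min}(X_{M_2})$, which is unbounded over ill-conditioned $X_{M_2}$, and $\|s\|_\infty\le 1$ gives no control of it. Moreover, the slack $|(X^{(j_0)})^T R|\le \tfrac{n}{2}\lambda_1$ at the single inactive coordinate constrains only one row of the block system, while the remaining (active) rows carry unknown entries of $\bEN_{M_2}$ with no slack at all. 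So the missing estimate is not unfinished algebra; with this organization it is out of reach.

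The paper avoids this wall by never inverting the Gram matrix. It perturbs the solution along a one-dimensional path: on the set $\mathcal{B}$ it puts $\beta_{M_1\cap M_2^c}=\tau\,\bEN_{M_1\cap M_2^c}$ and $\beta_{M_1^c\cap M_2}=(1-\tau)\,\Theta'\bEN_{M_1\cap M_2^c}$, where $\Theta$ satisfies $X_{M_2}\Theta=\mathcal{P}_{M_2}X_{M_1}$ (the regression of $X_{M_1}$ on $X_{M_2}$), minimizes the restricted objective over $\tau$, and shows that condition \eqref{Eq:CondENgroups} forces the minimizer $\tau^*<1$, so the configuration with $\beta_{M_1^c\cap M_2}=0$ cannot be the Elastic Net optimum. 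All dependence on $X_{M_2}$ then enters only through $\mathcal{P}_{M_2}$ and $\Theta\bEN_{M_1}$, which is what produces a sufficient condition of the stated form (even there, the resulting $c_1$ still carries a $\tfrac{\lambda_1}{2}\|\Theta\bEN_{M_1}\|_1$ term, so its independence of $X_{M_2}$ is itself delicate). Your KKT viewpoint could be salvaged by replacing Gram inversion with exactly this test direction---first-order optimality along a line is the same thing---but that is the paper's argument, not the one you outline. Incidentally, your contrapositive, had it worked, would prove the literal ``every coordinate of $\bEN_{M_2}$ is nonzero'' claim, which is stronger than what the paper's path argument actually delivers (it only rules out $\bEN_{M_1^c\cap M_2}$ being identically zero); but this observation does not repair the missing bound.
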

A proof and a discussion on the technical aspects of condition \eqref{Eq:CondENgroups}  and the constant $c_1$  are given in the appendix.
Theorem \ref{Thm:ENgroups} states that under a suitable condition, predictors belong to at least one of two competing models are chosen by the Elastic Net. In our context, when we have a model $M_1$ with a good prediction accuracy, i.e., $X_{M_1}\bEN_{M_1}$ is close to $\tilde{Y}$, then predictors in any another model $M_2$ which has similar prediction, that is $\mathcal{P}_{M_2}X_{M_1}\bEN_{M_1}$ is also close to $\tilde{Y}$,  would be chosen by the Elastic Net. Hence, these predictors are expected to have a positive value in $\gamma$, and our simulated annealing algorithm would pass through these models, provided the conditions in Theorem \ref{Thm:Algo} are met. Therefore, these models are expected to appear in $\mathcal{G}$.
\section{Numerical Results}
\label{Sec:numerRes}
\subsection{Simulation Study}
\label{SubSec:sims}
We consider a setup in which there are few models one would want to reveal. The following model is used $Y=X\beta+\epsilon$, $\epsilon \sim N(0,I)$ with $\beta_j$ equals to $C$ for $j=1,2,...,6$ and zero for $j>6$. $C$ is a constant chosen to get a desired signal to noise ratio (SNR).
The predictors in $X$ are all \iid $N(0,I)$ with the exception of $X^{(7)}$ and  $X^{(8)}$, which defined by
\begin{align*}
X^{(7)}= \frac{2}{3}[X^{(1)}+X^{(2)}]+\xi_1, && \xi_1 \sim N_n\left(0,\frac{1}{3}I\right) \\
X^{(8)}= \frac{2}{3}[X^{(3)}+X^{(4)}]+\xi_2, && \xi_2 \sim N_n\left(0,\frac{1}{3}I\right)
\end{align*}
where $\xi_1$ and $\xi_2$ are independent. In this scenario, there are 4 models we would like to find: (I) \{1,2,3,4,5,6\}; (II)  \{5,6,7,8\};  (III) \{3,4,5,6,7\};  and (IV)  \{1,2,5,6,8\}.

For each simulated dataset, we do the following:
\begin{enumerate}[1.]
\item Obtain $\gamma$ as explained in Section \ref{SubSec:gamma}. The tuning parameter of the Lasso is taken to be the minimizer of the cross-validation MSE. For the Elastic Net, $\alpha$ in \eqref{Eq:ENDefAlp} is taken to be $0.4$.
\item Run the simulated annealing algorithm for $\kappa=4,5,6$. The tuning parameters of the algorithm are chosen quite arbitrarily: $T=10 \times (0.7^1,0.7^2,...,\allowbreak 0.7^{20})$; $\Delta=(0, 0.02, 0.04, ..., 0.98,1)$; $N_t=N=100$ for all $t \in T$.
\item Then, for each model (I)--(IV), we check whether the model is the best model obtained (as measured by MSE) among models with the same size. For example, we check if Model (II) is the best model out of all models that were found with $\kappa=4$. We also check whether the model is one of the top five models among models with the same size.
\end{enumerate}
A 1000 simulated datasets were generated for each different scenario: For $n=100$, $p=200,500,1000$ and for $\text{SNR}=1,2,4,8,12,16$. Table \ref{Tab:SimRes} displays the proportion of times each model was chosen, either as the best one, or as one of the top five models. The results are as one might expect. For large SNR, the models are chosen more frequently. However,  models (III) and (IV) are competing, in the sense that they both include five predictors. Even for large SNR, each of the models, (III) and (IV), are chosen in about $50\%$ of the cases. As recommended in Section \ref{SubSec:annealAlgo}, we should start the algorithm from different initial points, that is, different initial models.
\begin{table}[ht!]
\caption{Proportion that each model is chosen as best model or as one of top five models for different number of potential predictors ($p$) and various SNR values.}
\label{Tab:SimRes}
\centering
\begin{tabular}{|c|c|cc|cc|cc|}
  \hline
   & & \multicolumn{2}{c}{$p=200$}&\multicolumn{2}{|c|}{$p=500$}& \multicolumn{2}{c|}{$p=1000$} \\
   SNR & Model & Best & Top 5 & Best & Top 5 & Best & Top 5 \\
  \hline
    \multirow{4}{*}{1} & (I) & 0.00 & 0.01
  &  0.00 & 0.00 & 0.00 & 0.00 \\
 & (II)& 0.42& 0.62 & 0.28 & 0.46 & 0.23 & 0.38 \\
 & (III)& 0.04& 0.08 & 0.01 & 0.02 & 0.00 & 0.00\\
 & (IV)& 0.04& 0.08 & 0.02 & 0.03 & 0.00 & 0.01\\
   \hline
   \multirow{4}{*}{2} & (I) & 0.10 & 0.12
     &  0.05 & 0.06 & 0.04 & 0.05 \\
    & (II)& 0.94& 0.96 & 0.92 & 0.94 & 0.94 & 0.95 \\
    & (III)& 0.27& 0.34 & 0.18 & 0.24 & 0.15 & 0.17\\
    & (IV)& 0.28& 0.37 & 0.18 & 0.22 & 0.14 & 0.17\\
      \hline
    \multirow{4}{*}{4} & (I) & 0.38 & 0.38
         &  0.20 & 0.20 & 0.11 & 0.11 \\
        & (II)& 0.96& 0.96 & 0.96 & 0.96 & 0.96 & 0.95 \\
        & (III)& 0.38& 0.46 & 0.31 & 0.36 & 0.22 & 0.24\\
        & (IV)& 0.39& 0.46 & 0.28 & 0.31 & 0.24 & 0.26\\
          \hline
      \multirow{4}{*}{8} & (I) & 0.72 & 0.72
            &  0.46 & 0.46 & 0.32 & 0.32 \\
                            & (II)& 0.97& 0.97 & 0.97 & 0.97 & 0.96 & 0.96 \\
                            & (III)& 0.41& 0.48 & 0.36 & 0.40 & 0.30 & 0.31\\
                            & (IV) & 0.44 & 0.50 & 0.34 & 0.37 & 0.29 & 0.31\\
                              \hline
     \multirow{4}{*}{12} & (I) & 0.86 & 0.86
        &  0.66 & 0.66 & 0.49 & 0.49 \\
                  & (II)& 0.98& 0.98 & 0.97 & 0.97 & 0.96 & 0.96 \\
                  & (III)& 0.49& 0.55 & 0.41 & 0.44 & 0.32 & 0.34\\
                  & (IV)& 0.42 & 0.48 & 0.37 & 0.40 & 0.32 & 0.34\\
                    \hline
\end{tabular}
\end{table}
Figure \ref{Fig:mainRes} presents comparison between running the algorithm once and three times, from different points.  Note the improved results for models (III) and (IV) when we start the algorithm from three different starting points.
\begin{figure}[htbp]
        \centering
 \includegraphics[width=0.8\textwidth]{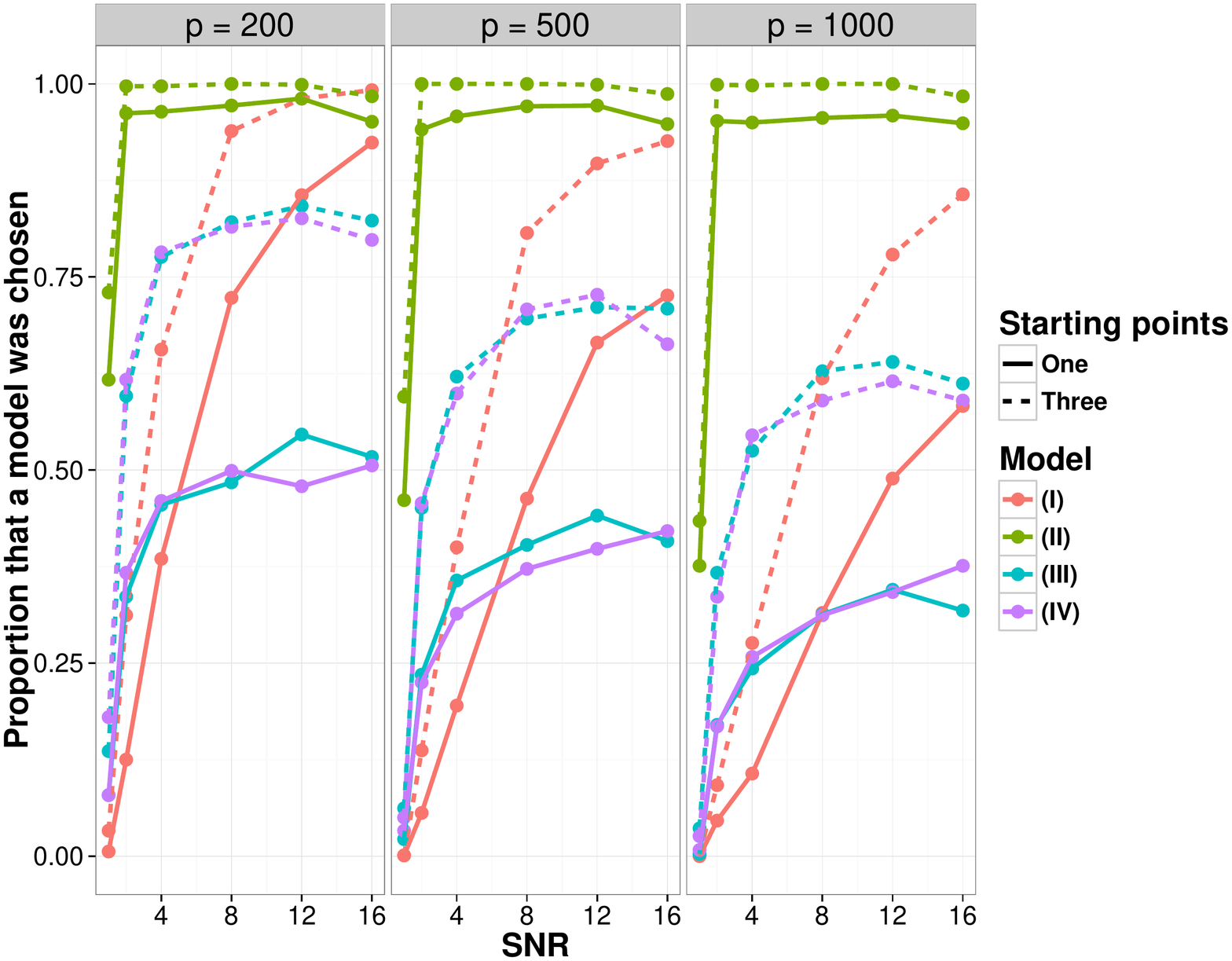}
         \caption{Proportion that each model is chosen as one of top five models for different number of potential predictors ($p$) and various SNR values. There is an apparent improvement when running the algorithm from three starting points.}
         \label{Fig:mainRes}
\end{figure}
The results described in this section are quite similar to results  obtained when forming $\mathcal{G(\kappa,\eta)}$ as defined in \eqref{Eq:Gdef}, for each $\kappa=4,5,6$ separately and using an arbitrary small value of $\eta$.
\subsection{Real data sets}
\label{SubSec:data}
We demonstrate the utility of using a minimal class of models in the analysis of two real datasets. The tuning parameters of the Lasso and the Elastic Net were taken to be the same as in Section \ref{SubSec:sims}. The tuning parameters of the simulated annealing algorithm were $T=10 \times (0.7^1,0.7^2,...,0.7^{20})$, $\Delta=(0, 0.01, 0.02, ...,\allowbreak 0.98,  0.99, 1)$, and $N_t=N=100$ for all $t \in T$.
\subsubsection{Riboflavin}
We use a high-dimensional data about the production of riboflavin (vitamin B2) in Bacillus subtilis that were recently published, \citeshort{buhlmann2014high}. The data consist $p=4088$ predictors. These are measures of log expression levels of genes in $n=71$ observations. The target variable is the (log) riboflavin production rate.

$S_L$ included 40 predictors (and intercept),  and $S_{EN}$ included 59 predictors when taking the tuning parameters as described in Section \ref{SubSec:sims}.
In total, we considered 61 different predictors  (i.e., genes). Panel (a) of Figure \ref{Fig:gammaHist} presents the histogram of the positive values in $\gamma$.
\begin{figure}	
	\centering
	\begin{subfigure}[b]{0.4\textwidth}
		\centering
		\includegraphics[width=\textwidth]{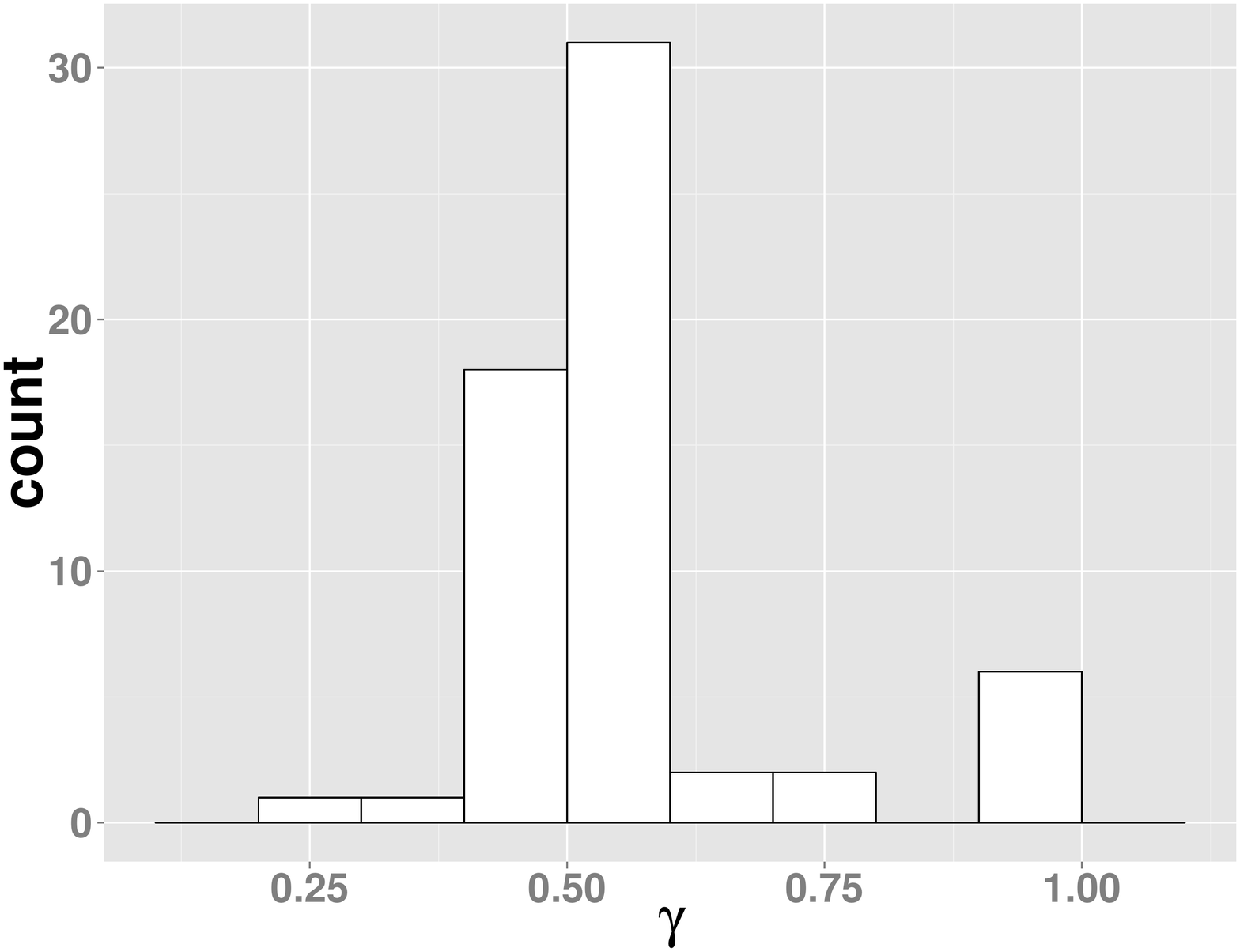}
 \caption{Riboflavin data }
                \label{Fig:gammaHist:Ribo}
	\end{subfigure}
	\quad
	\begin{subfigure}[b]{0.4\textwidth}
		\centering
		\includegraphics[width=\textwidth]{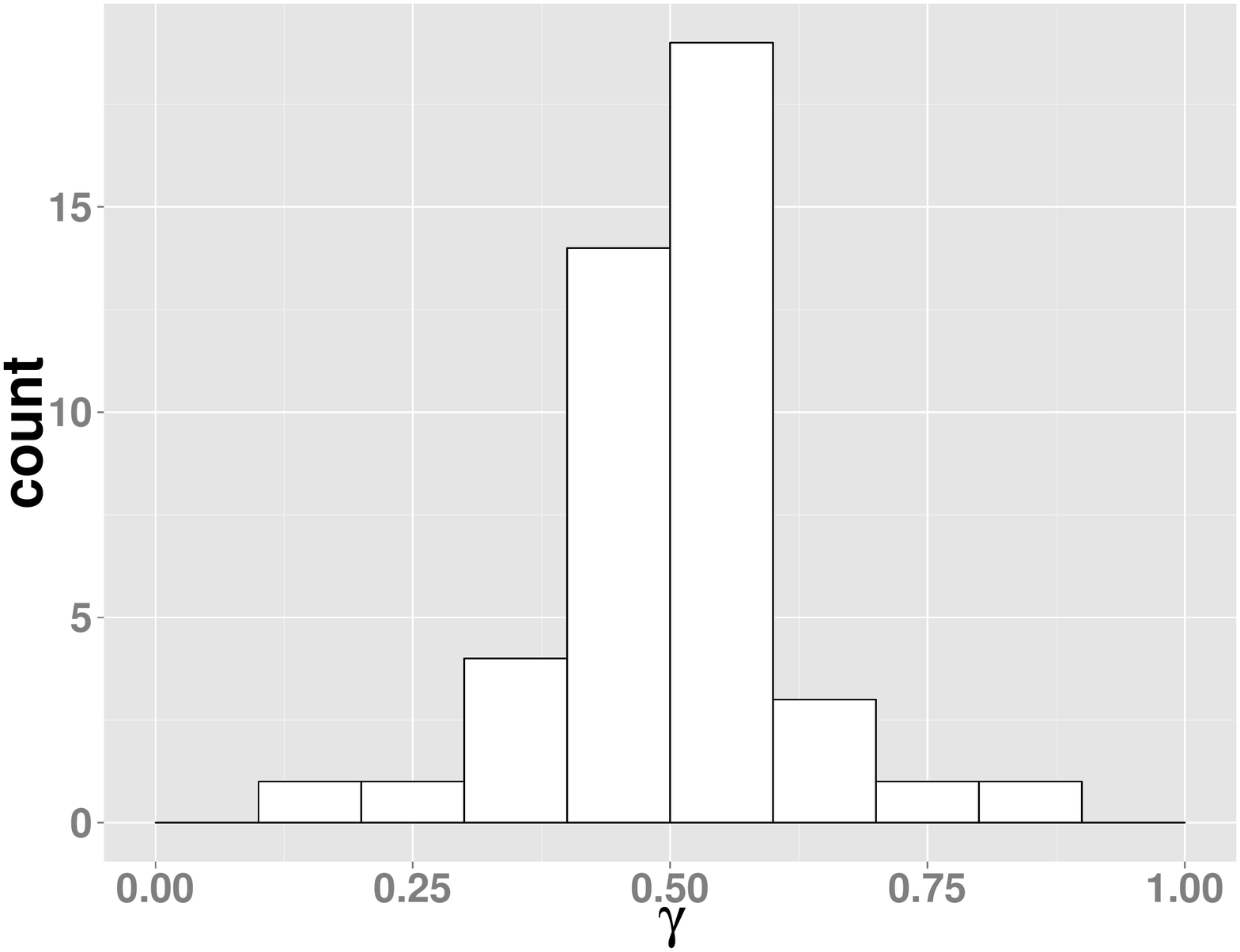}
      \caption{Air pollution data}
				\label{Fig:gammaHist:Death}
	\end{subfigure}
	\caption{Histograms of the values of $\gamma$ for positive entries only in the two dataset analysis examples.}
	\label{Fig:gammaHist}
\end{figure}

We run the algorithm from three random starting points for each model size between 1 and 10. We kept the five best models for each size and starting point. We then combined these models to get, after removal of duplicates, a total of 112 models. See Table \ref{Tab:RiboModelSizes} for the number of unique models as a function of the model size.
The following insights are drawn from examining more carefully the  models we obtained (see Table \ref{Tab:RiboModels} in the appendix):
\begin{itemize}
\item In total, the models include 53 different predictors. Out of these, 35 predictors appear in less than $10\%$ of the models, meaning they are probably less important as predictors of riboflavin production rate.
\item  Gene number $2564$ appears in all models of size larger than 3 and in 5 out of 8 models of size 3. However,  this gene is not included in any of the smaller models. This gene is the only one that appears in more than half of our models. We can infer that while this gene does not hold an effect strong enough comparing to other genes in order to stand out, it has a unique relation with the outcome predictor that could not be mimicked using other combination of genes.
\item At least one gene from the group $\{4002,4003,4004,4006\}$ is contained in all models of size larger than one, although never more than one of these genes. Genes number $4003$ and $4004$ appear more frequently than genes number $4002$ and $4006$. Looking at the correlation matrix of these genes only, we see they are all highly correlated (pairwise correlations $>0.97$). Future research could take this finding into account  by using, e.g., the Group Lasso, \citeshort{yuan2006model}.
\item Similarly,  either gene number $1278$ or gene number $1279$ appear in about half of the models. They are also strongly correlated $(0.984)$. The same statement holds for genes number $69$ and $73$ (correlation of $0.945$) as well.
\item The impotence of genes number $792$,$1131$, and possibly others, should be also examined since each of them appears in a variety of different models.
\end{itemize}
\begin{table}
\centering
\caption{Riboflavin data: Number of unique models for each model size after running the algorithm from 3 different starting points}
\label{Tab:RiboModelSizes}
\begin{tabular}{|c|cccccccccc|}
\hline \rule[-1ex]{0pt}{4.5ex}
Model size & 1 & 2 & 3 & 4 & 5 & 6 & 7 & 8 & 9 & 10 \\
\hline \rule[-1ex]{0pt}{4.5ex}
Number of models & 5 & 5 & 8 & 6 & 13 & 15 & 15 & 15 & 15 & 15 \\
\hline
\end{tabular}
\end{table}
We now compare our results to models obtained using other methods, as reported in \citeshort{buhlmann2014high}. The multiple sample splitting method to get $p$-values, \citeshort{meinshausen2009p},  yields only one significant predictor. Indeed, a model that includes only this predictor is part of our models. If one constructs his model using the stability selection, \citeshort{meinshausen2010stability}, as a screening process for the predictors, he would get a model consisting three genes, which correspond to columns number $625,2565$ and $4004$ in our $X$ matrix. However, this model is not included in our top models. In fact, the highest MSE for a model in our 8 models of size 3 is 0.2047 while the MSE of the model suggested using the stability selection is 0.2703, more than $30\% $ difference!
\subsubsection{Air pollution}
\label{SubSubSec:mortData}
We now demonstrate how the proposed procedure can be used for traditional, purportedly simpler, problem. The air pollution data set, \citeshort{mcdonald1973instabilities}, includes 58 Standard Metropolitan Statistical Areas (SMSAs) of the US (after removal of outliers). The outcome variable is age-adjusted mortality rate. There are 15 potential predictors including air pollution, environmental, demographic and socioeconomic predictors. Description of the predictors is given in Table \ref{Tab:mortPredictors} in the appendix.

There is no guarantee that the relationship between the predictors and the outcome variable has linear form. We therefore include commonly used transformations of each variable, namely natural logarithm, square root and power of two transformations. Considering also all possible two way interactions, we have a total of 165 predictors.

High-dimensional regression model that includes transformations and interactions has been dealt with in the literature. For example, by  using two steps procedures, e.g., \citeshort{bickel2010hierarchical}, or by solving a relevant optimization problem, e.g.,  \citeshort{bien2013lasso}. Our procedure has a different goal, since we are not looking for the best predictive model, but rather for a meaningful insights about the data.

Following the Lasso and Elastic Net step, we are left with 44 predictors with positive $\gamma_j$ (one untransformed predictor, 3 log transformations, 4 square root transformations, 8 power of two transformations and the rest are interactions). Panel (b) of Figure \ref{Fig:gammaHist} presents the histogram of the positive values in $\gamma$.

For each $\kappa =1,2,...,10$, we run the algorithm from three starting points, and then keep the 5 best models. In total, we get 126 unique models. Table \ref{Tab:DeathVars} summarizes the results for prominent predictors, that is, predictors that appear in at least quarter of the models we obtained. The table presents a matrix of the joint frequency of each two predictors. Each cell in the table is the number of models including both the predictor listed in the row and the predictor listed in the column. The diagonal is simply the number of models that a predictor appears in.
\begin{table}[h!]
\caption{Frequency that each two predictors together in the 126 models. The diagonal is simply the number of models that a predictor appears in. For example, in 27 models both $\log(\textbf{NOx})$ and $\sqrt{\textbf{nwht}}$ appear}
\label{Tab:DeathVars}
\centering
\begin{tabular}{|l|cccccccc|}
  \hline
 &  (1)  & ( 2) &(3) & (4) & (5) &(6)  &(7) &(8) \\
  \hline
(1) $\log(\textbf{NOx})$ &  97 &  27 &  36 &  30 &  50 &  31 &  33 &  35 \\
(2)  $\sqrt{\textbf{nwht}}$ &  &  33 &   7 &   8 &  14 &  10 &   0 &   0 \\
(3)  $\sqrt{\textbf{HC}}$ &  &  &  37 &  12 &  18 &  10 &  16 &  15 \\
(4)  $\textbf{HC}\times \textbf{prec}$ &  &  &  &  33 &  26 &  17 &  18 &   8 \\
(5)  $\textbf{jant}\times \textbf{ovr65}$ &  &  &  &  &  66 &  30 &  27 &  26 \\
(6)  $\textbf{pphs}\times \textbf{educ}$  &  &  &  &  &  &  37 &  14 &  14 \\
(7)  $\textbf{nwht}\times \textbf{ofwk}$ &  &  &  &  &  &  &  46 &   1 \\
(8) $\textbf{nwht}\times \textbf{mst}$ &  &  &  &  &  &  &  &  43 \\
   \hline
\end{tabular}
\end{table}

Three (transformed) main effects are chosen. The nitric oxide pollution is invaluable for prediction of mortality rate. This predictor (in a log shape) appears in a large majority of the models. Apart from this predictor, the hydrocarbon pollution appears (after a square root transformation), but only in about $30\%$ of the models. There is, however,  one result that catches the eye. The two zeros in the matrix (second row, last two values) mean that interactions involving the percentage of non-white population are only part of models that do not include the percentage of non-white population as a main effect. Moreover, the two interactions do not make much sense. The evident conclusion is that the two interactions took the place of the main effect. We therefore repeat the analysis after the removal of these two interactions.

The new frequency matrix is displayed in Table  \ref{Tab:DeathVarsSt2}. The conclusion regarding the importance of the nitric oxide pollution remains. Nevertheless, hydrocarbon pollution is not relevant anymore. The percentage of non-white population appears untransformed but also after taking its squared root. However, this predictor appears in single form only for each model. We conclude that this predictor should be used for prediction of the mortality rate, but the question of transformation remains unsolved.
\begin{table}[h!]
\caption{Frequency that each two predictors appear together in the 126 models obtained after removal of the two interactions. The diagonal is simply the number of models that a predictor is included. For example, in 43 different models  both $\log(\textbf{NOx})$ and $\sqrt{\textbf{nwht}}$ appear.}
\label{Tab:DeathVarsSt2}
\centering
\begin{tabular}{|l|cccccccc|}
  \hline
& (1)  & (2)  & (3)  & (4)  & (5)  & (6)  & (7)  & (8)  \\
  \hline
(1)  $\textbf{nwht}$ &  37 &  13 &  36 &   0 &  12 &  17 &  28 &  26 \\
(2)  $\log(\textbf{prec})$ &  &  31 &  31 &  13 &   9 &   2 &  15 &  19 \\
(3)  $\log(\textbf{NOx})$ &  &  & 106 &  43 &  28 &  40 &  67 &  62 \\
(4)   $\sqrt{\textbf{nwht}}$ &  &  &  &  44 &  11 &  13 &  24 &  24 \\
(5)  $\textbf{dens}\times \textbf{prec}$ &  &  &  &  &  30 &  14 &  26 &  20 \\
(6)  $\textbf{hum}\times \textbf{prec}$  &  &  &  &  &  &  40 &  38 &  28 \\
(7)  $\textbf{jant}\times \textbf{ovr65}$ &  &  &  &  &  &  &  68 &  47 \\
(8)  $\textbf{pphs}\times \textbf{educ}$ &  &  &  &  &  &  &  &  63 \\
   \hline
\end{tabular}
\end{table}

Turning to the interactions. The interaction between percentage of elderly population  and the average temperature in January appears while the appropriate main effects do not appear. However, the absence of age related effect is not so surprising since the outcome variable, the mortality rate,  is age corrected. The interaction between the household size and the level of education appears in half of the models, whereas appropriate main effects do  not appear. This interaction could be a proxy to other effects that were not measured. Interactions involving the average precipitation appear less than other predictors. The interaction with humidity usually appears without the main effect of precipitation. Nevertheless, both interactions should be taken into account when constructing a prediction model for the mortality rate.
\section{Discussion}
\label{Sec:disc}
Model selection consistency is an ambitious goal to achieve when dealing with high-dimensional data. A ``minimal class of models'' was defined to be a set of models that should be considered as candidates for prediction of the outcome variable. A search algorithm to identify these models was developed using simulated annealing method. Under suitable conditions, that are outlined in Theorem \ref{Thm:Algo}, the algorithm passes through models of interest.

A score for each predictor is given using  the Lasso, the Elastic Net and a reduced penalty Lasso. These scores are used by the search algorithm. They are not necessarily optimal  but we claim that they are sensible. Other scoring methods may achieve better results. On the other hand, the scores we use here may be used for other purposes. Theoretical justification for using the Elastic Net to unveil predictors the Lasso might have missed was also presented.

 A simulation study was conducted to demonstrate the capability of the search algorithm to detect relevant models. As illustrated using real data examples, a class of minimal models can be used to derive conclusions regarding the problem at hand. This is rarely the case that a researcher believes a one true model exists, especially in the $p>n$ regime. Therefore, we suggest to abandon the search for this ``holy grail'', and to analyze the class of minimal models instead.

 It is well known that achieving good prediction and successful model selection simultaneously, in a reasonable computation time, is impossible, especially in the high-dimensional setting. We therefore suggested here to make a compromise. Our approach is not necessarily optimal for prediction, nor for model selection. However, it offers a data analysis method that takes into account the uncertainty in model selection, but ensures reasonable prediction accuracy. This method can be used for either prediction, parameter estimation or model selection.

\appendix
\vspace*{2em}
\LARGE
\noindent
\textbf{Appendix}
\normalsize
\section{Proofs}

\renewcommand{\thesubsection}{A.\arabic{subsection}}
\subsection{Proof of Theorem \ref{Thm:Algo}}

We start with the following lemma.
 \begin{lemma}
\label{Lem:submodel}
Assume $Y=\mu+\epsilon$ and assume also (A1)-(A4). Let $\mathcal{S}_k=\{S: |S|=k, \hat{\beta}_S=(X_S^T X_S)^{-1}X_S^T Y\}$ be the set of all models with $k$ variables, such that $\hat{\beta}_S$, the LS estimate, is unique. Denote $S^\star_j=S\cup \{j\}, j \notin S$ for a model that includes $S$ and additional variable $j$ not in $S$. We have
\begin{equation*}
\max_{\substack{ S\in \mathcal{S}_k\\  1\le j\le p}} \epsilon^T(X_{S^\star_j}\hat{\beta}_{S^\star_j}-X_S\hat{\beta}_S)=\o_p(n)
\end{equation*}
\end{lemma}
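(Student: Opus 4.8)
The plan is to reduce the lemma to a statement about orthogonal projections. Since $X_S\hat{\beta}_S=\mathcal{P}_S Y$ and $X_{S^\star_j}\hat{\beta}_{S^\star_j}=\mathcal{P}_{S^\star_j}Y$, and since $S\subset S^\star_j$ forces $\mathrm{col}(X_S)\subseteq\mathrm{col}(X_{S^\star_j})$, the increment $\mathcal{P}_{S^\star_j}-\mathcal{P}_S$ is itself the orthogonal projection onto the complement of $\mathrm{col}(X_S)$ inside $\mathrm{col}(X_{S^\star_j})$, a subspace of dimension at most one (it is exactly one when $\hat{\beta}_S$ is unique and $X^{(j)}\notin\mathrm{col}(X_S)$, and zero otherwise). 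I would therefore write $\mathcal{P}_{S^\star_j}-\mathcal{P}_S=u_{S,j}u_{S,j}^T$, where $u_{S,j}$ is the unit vector in the direction of the residual of $X^{(j)}$ after projecting out $X_S$ (with $u_{S,j}=0$ when that residual vanishes). The summand then becomes $\epsilon^T(X_{S^\star_j}\hat{\beta}_{S^\star_j}-X_S\hat{\beta}_S)=(u_{S,j}^T\epsilon)(u_{S,j}^T Y)$, and substituting $Y=\mu+\epsilon$ splits it into a cross term $(u_{S,j}^T\epsilon)(u_{S,j}^T\mu)$ and a noise term $(u_{S,j}^T\epsilon)^2$, each of which I would bound uniformly over all admissible $(S,j)$.

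For the noise term the argument is a union bound. By (A2) the relevant models have size $k=\O(1)$, so by (A3) there are at most $\binom{p}{k}p\le p^{k+1}=n^{a(k+1)}$ pairs $(S,j)$, hence only polynomially many (in $n$) distinct subspaces $\mathrm{col}(X_{S^\star_j})$, each of dimension at most $k+1=\O(1)$. Under (A4) each $\|\mathcal{P}_{S^\star_j}\epsilon\|_2^2/\sigma^2$ is $\chi^2$ with at most $k+1$ degrees of freedom, so a union bound combined with a standard chi-squared tail inequality (for instance Laurent--Massart, taking the deviation level of order $\log$ of the number of subspaces) yields $\max_{S,j}(u_{S,j}^T\epsilon)^2\le\max_{S,j}\|\mathcal{P}_{S^\star_j}\epsilon\|_2^2=\O_p(\log n)$, which is $\o_p(n)$.

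For the cross term, Cauchy--Schwarz gives $|u_{S,j}^T\mu|\le\|\mu\|_2$ for every $(S,j)$, and (A1) yields $\|\mu\|_2=\O(\sqrt n)$; together with $\max_{S,j}|u_{S,j}^T\epsilon|=\O_p(\sqrt{\log n})$ inherited from the previous step, this gives $\max_{S,j}|(u_{S,j}^T\epsilon)(u_{S,j}^T\mu)|=\O_p(\sqrt{n\log n})=\o_p(n)$. Adding the two bounds completes the argument. I expect the main obstacle to be handling the uniformity over all models simultaneously: the proof works precisely because (A3) keeps the number of subspaces polynomial so that $\log(\#\,\text{subspaces})=\O(\log n)$, (A2) keeps each projection dimension bounded, and (A4) makes the $\chi^2$ tail bound clean (a sub-Gaussian concentration inequality for quadratic forms would replace it otherwise). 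It is worth noting that the eventual $\o_p(n)$ rate is driven by the cross term's $\sqrt{n\log n}$ rather than by the quadratic noise term, so the signal part $\mu$ is what dictates the bound.
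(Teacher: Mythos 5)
Your proof is correct and follows essentially the same route as the paper's: the rank-one increment $\mathcal{P}_{S^\star_j}-\mathcal{P}_S=u_{S,j}u_{S,j}^T$ (the paper's $\mathcal{P}_j$), the split of $\epsilon^T(X_{S^\star_j}\hat{\beta}_{S^\star_j}-X_S\hat{\beta}_S)$ into the cross term $(u_{S,j}^T\epsilon)(u_{S,j}^T\mu)$ and the noise term $(u_{S,j}^T\epsilon)^2$, a union bound over the polynomially many pairs $(S,j)$ for the noise term, and Cauchy--Schwarz with (A1) for the cross term. If anything, your handling of the cross term is slightly more careful than the paper's, which asserts only the pointwise bound $\epsilon^T\mathcal{P}_j\mu=O_p(\sqrt{n})$, whereas you make the bound uniform over all $(S,j)$ by combining the deterministic inequality $|u_{S,j}^T\mu|\le\|\mu\|_2$ with the uniform $O_p(\sqrt{\log n})$ control on $|u_{S,j}^T\epsilon|$ already obtained in the union-bound step.
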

\begin{proof}
Let $\xi_j$ be the vector of coefficients obtained by regressing $X^{(j)}$, the $j^{th}$ column in $X$, on $X_S$ and let $\mathcal{P}_j$ be the projection operator on the subspace spanned by the part of $X^{(j)}$ which is orthogonal to the subspace spanned by $X_S$. That is,
\begin{equation*}
\mathcal{P}_j=\frac{(X^{(j)}-X_S\xi_j)(X^{(j)}-X_S\xi_j)^T}{||X^{(j)}-X_S\xi_j||_2^2}.
\end{equation*}
Let $\hat{\beta}_{S^\star_j}^j$ be the coefficient estimate of $X^{(j)}$ in model $S^\star_j$, and let $\hat{\beta}_{S^\star_j}^{-j}$ be the coefficient estimates of the variables in $S$ but for the model $S^\star_j$. Since $(X^{(j)}-X_S\xi_j)$ is orthogonal to the subspace spanned by the columns of $X_S$ we have
\begin{align*}
X_{S^\star_j}\hat{\beta}&_{S^\star_j}=X^{(j)}\hat{\beta}_{S^\star_j}^j+X_S\hat{\beta}_{S^\star_j}^{-j}\\
&=(X^{(j)}-X_S\xi_j)\hat{\beta}_{S^\star_j}^j+X_S(\hat{\beta}_{S^\star_j}^{-j}+\xi_j\hat{\beta}_{S^\star_j}^j) \\
&= (X^{(j)}-X_S\xi_j)\hat{\beta}_{S^\star_j}^j+X_S\hat{\beta}_S\\
&= \mathcal{P}_jy+X_S\hat{\beta}_S.
\end{align*}
Therefore,
\begin{equation*}
\epsilon^T (X_{S^\star_j}\hat{\beta}_{S^\star_j}-X_S\hat{\beta}_S) = \epsilon^T\mathcal{P}_j\mu+\epsilon^T\mathcal{P}_j\epsilon.
\end{equation*}
Now, since $||\mathcal{P}_j\mu||_2^2\le||\mu||_2^2=\O(n)$, we get that for all $j$,  $\epsilon^T\mathcal{P}_j\mu=O_p(\sqrt{n})$. Next, let $Z_1,...,Z_{p^{k+1}}$ be $N(0,\sigma^2)$ random variables and observe that the approximate size of the set $\{S_k\} \times \{1,...,p\}$ is $p^{k+1}$. We have for any $a>0$
\begin{equation*}
P\left(\max_{\substack{ S\in \mathcal{S}_k\\  1\le j\le p}}\frac{1}{n}\epsilon^T\mathcal{P}_j\epsilon \ge a \right)\le P\left(\max_{ 1\le j\le p^{k+1}}|Z_j| \ge \sqrt{\frac{an}{\sigma^2}}\right) \le \sigma \sqrt\frac{2(k+1)\log p+\o(1)}{an}.
\end{equation*}
Now, since $p=n^\alpha$ and $k=o(n/\log n)$ we get that
\begin{equation*}
P\left(\max_{\substack{ S\in \mathcal{S}_k\\  1\le j\le p}}\frac{1}{n}\epsilon^T\mathcal{P}_j\epsilon \ge a \right)=o(1)
\end{equation*}
and we are done.
\end{proof}
We can now move to the proof of Theorem \ref{Thm:Algo}. For simplicity, the notation of $i$ as the iteration number for the current temperature $t$ is suppressed.
Note that it is enough to only consider models such that $S \cap \bar{S} = \emptyset$ and to consider $m=s_0$.
Denote $Q_t(S,g,j)$ for the probability of a move in the direction of $\bar{S}$ in the next iteration, that is, the probability of choosing a variable $j \in S\cap \bar{S}^c$ and replace it with a variable $g \in S^c \cap \bar{S}$.  Denote $S'=\{S/\{j\}\}\cup\{g\}$ for this new model. We have
\begin{equation}
\label{Eq:probQ}
\begin{aligned}
Q_t&(S,g,j)\\
&=P(S\rightarrow S')\min\left[1,\exp\left(\frac{||Y-X_S\hat{\beta}_{S}||^2_2-||Y-X_{S'}\hat{\beta}_{S'}||^2_2}{t}\right)\frac{P(S'\rightarrow S)}{P(S\rightarrow S')}\right]
\end{aligned}
\end{equation}
where $P(S \rightarrow S')$ is the probability of suggesting $S'$, given current model is $S$.
Now, since $\gamma_{min} \ge c_\gamma $ and since the maximal value in $\gamma$ equals to one by definition, we have for all $S \subseteq A_\gamma$,
\begin{align}
\label{Eq:proofbounds}
c_\gamma(h_\gamma-s_0)&\le \sum_{u \notin S}\gamma_u \le h_\gamma-s_0 \nonumber\\
s_0& \le  \sum_{v \in S}\frac{1}{\gamma_v} \le \frac{s_0}{c_\gamma}.
\end{align}
Now, by substituting \eqref{Eq:proofbounds} into \eqref{Eq:ProbOut}-\eqref{Eq:prob.model.inv} we get
\begin{equation}
 \begin{aligned}
{P(S\rightarrow S')} &= \frac{\gamma_g}{\sum_{u \notin S}\gamma_u} \frac{1/\gamma_j}{\sum_{v\in S} \frac{1}{\gamma_v}} \ge \frac{c_\gamma^2}{s_0(h_\gamma-s_0)},\\
\frac{P(S'\rightarrow S)}{P(S\rightarrow S')} & = \frac{\gamma_j^2}{\gamma_g^2}\frac{\sum_{u \notin S}\gamma_u\sum_{v\in S} \frac{1}{\gamma_v}}{\sum_{u \notin S'}\gamma_u\sum_{v\in S'} \frac{1}{\gamma_v}}\ge c_\gamma^4. \label{Eq:bounds}
\end{aligned}
\end{equation}
Next, we have
\begin{align}
\nonumber
\frac{1}{n}||& Y-X_S\hat{\beta}_{S}||^2_2-\frac{1}{n}||Y-X_{S'}\hat{\beta}_{S'}||^2_2  \\ \nonumber
&= \frac{1}{n}\left[(Y-X_{S'}\hat{\beta}_{S'})+(Y-X_S\hat{\beta}_S)\right]^T\left(X_{S'}\hat{\beta}_{S'}-X_S\hat{\beta}_S\right)\\ \nonumber
&= \frac{1}{n}Y^T\left(X_{S'}\hat{\beta}_{S'}-X_S\hat{\beta}_S\right)\\ \nonumber
&= \frac{1}{n}\mu^T\left(X_{S'}\hat{\beta}_{S'}-X_S\hat{\beta}_S\right)+\frac{1}{n}\epsilon^T\left(X_{S'}\hat{\beta}_{S'}-X_S\hat{\beta}_S\right)\\ \label{Eq:mu.delta}
&=\frac{1}{n}\mu^T\left(X_{S'}\hat{\beta}_{S'}-X_S\hat{\beta}_S\right)+\Delta_n(S,S')
\end{align}
where the second equality is due to $\hat{\beta}_S$ and $\hat{\beta}_{S'}$ being LS estimators.  We get that an estimator in linear model achieves better (lower) sample MSE, if the correlation of the prediction using this estimator with $Y$ is larger. Now, denote $S''=S'\cup S$. We have
\begin{equation*}
\Delta_n(S,S') = \frac{1}{n}\epsilon^T\left[(X_{S''}\hat{\beta}_{S''}-X_S\hat{\beta}_S)-(X_{S''}\hat{\beta}_{S''}-X_{S'}\hat{\beta}_{S'})\right]
\end{equation*}
and if we apply Lemma \ref{Lem:submodel} twice we get that $\Delta_n(S,S')=\o_p(1)$. Now, regarding the first term in \eqref{Eq:mu.delta},
\begin{equation}
\begin{aligned}
\label{Eq:mu.delta.first.term}
\frac{1}{n}\mu^T&\left(X_{S'}\hat{\beta}_{S'}-X_S\hat{\beta}_S\right) = \frac{1}{n}\mu^T\left(\mathcal{P_{S'}}y-\mathcal{P_S}y \right)  \\
&=\frac{1}{n}\left(||\mathcal{P_{S'}}\mu||_2^2-||\mathcal{P_{S}}\mu||_2^2\right)+\Delta'_n(S,S')
\end{aligned}
\end{equation}
where $\Delta'_n(S,S')=\frac{1}{n}\mu^T\left[\mathcal{P_{S'}}\epsilon-\mathcal{P_{S}}\epsilon\right]$.  The content of the proof of Lemma \ref{Lem:submodel} implies that $\Delta'_n(S,S')=\o_p(1)$. Now, by \eqref{Eq:mu.delta} and \eqref{Eq:mu.delta.first.term} and since Assumption (B1) holds for $t_0$  we get that for large enough $n$
\begin{equation}
\begin{aligned}
\label{Eq:minlargeone}
\frac{1}{n}\left(||Y-X_S\hat{\beta}_{S}||^2_2-||Y-X_{S'}\hat{\beta}_{S'}||^2_2\right)
\ge 4t \log c_\gamma.
\end{aligned}
\end{equation}
Now, by substituting \eqref{Eq:bounds} and \eqref{Eq:minlargeone} into \eqref{Eq:probQ} we get that for large enough $n$,
\begin{equation*}
Q_{t_0}(S,g,j) \ge  \frac{c_\gamma^2}{s_0(h_\gamma-s_0)}
\end{equation*}
for all $S \ne \bar{S}$, $j \in S\cap \bar{S}^c$ and $g \in S^c \cap \bar{S}$. \eqref{Eq:thm} follows from this immediately since for any integer $m$ and for all $S \ne \bar{S}$,
\begin{equation*}
P_{t_0}^m(S'|S) \ge \min_{\substack{\{S:S\cap \bar{S}=\emptyset\} \\j \in S\cap \bar{S}^c\\g \in S^c \cap \bar{S}}}[Q_t(S,g,j)]^{s_0} \ge \left[\frac{c^2_\gamma}{s_0(h_\gamma-s_0)}\right]^{s_0}.
\end{equation*}
\subsection{Proof of Proposition \ref{Prop:ENsimple}}
Recall that the Elastic Net estimator $\bEN$ minimizes
\begin{equation}
\label{Eq:elasticfunc}
||Y-X\beta||^2_2+\lambda_1|\beta|+\lambda_2||\beta||^2_2
\end{equation}
Now, WLOG assume that $\hat{\beta}^{EN}$ is a solution such that $\bEN_1>0$. For convenience, we omit the ``$EN$'' superscript from now on (i.e., $\hat{\beta}=\bEN$). Define the subspace
\begin{equation}
\label{Eq:subspaceB}
\mathcal{B}:=\{\beta: \forall{i \ne 1,2} \hspace{2mm} \beta_i=\hat{\beta}_i, \hspace{2mm} \beta_1=\tau\hat{\beta}_1, \hspace{2 mm} \beta_2=(1-\tau)\hat{\beta}_1 \}.
\end{equation}
If the minimum of \eqref{Eq:elasticfunc} over $\mathcal{B}$ is obtained for $\tau\ne1$, then given that predictor $1$ is part of the Elastic Net model, predictor $2$ is also part of this model.

WLOG, write down $X$ as $X=(X_{(12)}\quad X_{-(12)})$ where $X_{(12)}=(X^{(1)} \quad X^{(2)})$ are the first two columns of $X$ and $X_{-(12)}$ are the rest of its columns. Similarly, we have $\beta^T = (\beta_{(12)}^T \quad \beta_{-(12)}^T)$ where $\beta_{(12)}$ is the first two entries in the vector $\beta$ and $\beta_{-(12)}$ is the rest of the vector. Define $\tilde{Y}=Y-X_{-(12)}\beta_{-(12)}$. We can rewrite \eqref{Eq:elasticfunc} as
\begin{equation}
\label{Eq:elasticfunctild}
||\tilde{Y}-X_{(12)}\beta_{(12)}||^2_2+\lambda_1(|\beta_{-(12)}|+|\beta_{(12)}|)+\lambda_2(||\beta_{-(12)}||^2_2+||\beta_{(12)}||^2_2)
\end{equation}
If the minimum of \eqref{Eq:elasticfunctild}, on  $\mathcal{B}$,  is achieved at $0<\tau^*<1$ then $\hat{\beta}_2$ must be non zero. Minimizing \eqref{Eq:elasticfunctild} on $\mathcal{B}$ is essentially minimizing
\begin{equation}
\label{Eq:whattomin}
-2\tilde{Y}^T X_{(12)}\beta_{(12)}+||X_{(12)}\beta_{(12)}||^2_2+\lambda_2||\beta_{(12)}||^2_2
\end{equation}
on $\mathcal{B}$ .
Now, by the definition of $\mathcal{B}$ in \eqref{Eq:subspaceB} and using simple algebra we get that \eqref{Eq:whattomin} equals to
\begin{equation*}
2\left[\hat{\beta}_1\tilde{Y}^T\left(\tau(X^{(2)}-X^{(1)})-X^{(2)}\right)-\hat{\beta}_1^2\tau(1-\tau)(1-\rho)+\lambda_2\hat{\beta}_1^2\left(\frac{1}{2}-\tau(1-\tau)\right)\right].
\end{equation*}
This is a quadratic function of $\tau$, and by equating its derivative to zero we get that
\begin{equation*}
\tau^*=\frac{1}{2}-\frac{\tilde{Y}^T(X^{(2)}-X^{(1)})}{2\hat{\beta}_1(\lambda_2+1-\rho)}
\end{equation*}
is the minimizer of \eqref{Eq:elasticfunc} (the coefficient of the quadratic term is positive). Note that for $X^{(2)}=X^{(1)}$ we get the expected $\tau^*=\frac{1}{2}$ solution. Note also that this reveals no information regarding the Lasso where $\lambda_2=0$.
Next, we get that $0<\tau^*<1$ if
\begin{equation}
\label{Eq:taufrac}
\left|\frac{\tilde{Y}^T(X^{(2)}-X^{(1)})	}{\hat{\beta}_1(\lambda_2+1-\rho)}\right| < 1.
\end{equation}
Since $||X^{(2)}-X^{(1)}||^2_2=2(1-\rho)$ we have
\begin{equation*}
|\tilde{Y}^T(X^{(2)}-X^{(1)})|\le \sum\limits_{i=1}^{n}|\tilde{Y}_i||X^{(2)}_i-X^{(1)}_i|\le {||\tilde{Y}||_2}\sqrt{2(1-\rho)},
\end{equation*}
using the triangle inequality and then Cauchy-Schwartz inequality.
It is assumed  that $\hat{\beta}_1\ge c_\beta>0$ and it is known that  $||\tilde{Y}||_2\le ||Y||_2$. Therefore, we may rewrite \eqref{Eq:taufrac} as
\begin{equation*}
\frac{\sqrt{2}||Y||_2\sqrt{1-\rho}}{c_\beta(\lambda_2+1-\rho)} < 1.
\end{equation*}
Now, Denote $t=\sqrt{1-\rho}, u=\frac{||Y||_2}{c_\beta}$, we have
\begin{equation*}
t^2-\sqrt{2}ut+\lambda_2>0.
\end{equation*}
For $\lambda_2>\frac{1}{2}u^2$, we get the result we want for all $\rho$'s.
For $\lambda_2<\frac{u^2}{2}$ we have
\begin{align}
\sqrt{1-\rho}>\frac{1}{\sqrt{2}}(u+\sqrt{u^2-2\lambda_2})\label{Eq:bad},\\
\sqrt{1-\rho}<\frac{1}{\sqrt{2}}(u-\sqrt{u^2-2\lambda_2})\label{Eq:good}.
\end{align}
The RHS of \eqref{Eq:bad} is larger than $1$ if $\lambda_2<\sqrt{2}u-1$. That is, there is no suitable  $\rho$ for this case. The RHS of \eqref{Eq:good} is always positive, and for the same condition $\lambda_2<\sqrt{2}u-1$, it also meaningful, i.e., $(u-\sqrt{u^2-2\lambda_2})<\sqrt{2}$ and in terms of $\rho$,
\begin{equation*}
\rho>1-\frac{1}{2}(u-\sqrt{u^2-2\lambda_2})^2
\end{equation*}
or alternatively,
\begin{equation*}
\rho>1-\frac{u^2}{2}\left(1-\sqrt{1-\frac{2\lambda_2}{u}}\right)^2
\end{equation*}
and by Taylor expansion for $2\lambda_2/u$ we get
\begin{equation*}
\rho>1-\frac{\lambda^2_2}{2u^2}
\end{equation*}
$\hfill \square$
\subsection{Proof of Theorem \ref{Thm:ENgroups}}
The proof is similar to the proof of Proposition \ref{Prop:ENsimple}. Let $\hat{\beta}=\bEN$ be the Elastic Net estimator and denote $\hat{\beta}_M$ for the values in $\hat{\beta}$ corresponding to the set of predictors $M$. We can partition the set of potential predictors $\{1,2,...,p\}$ to four disjoint subsets: $M^{(-)}$; $M_1\cap M_2^c$; $M_1^c\cap M_2$ and $M_1\cap M_2$. We replace \eqref{Eq:subspaceB} with
\begin{align}
\label{Eq:subspaceBgroups}
\mathcal{B}:=&\{\beta: \beta_{M^{(-)}}=\hat{\beta}_{M^{(-)}} \hspace{2mm} \beta_{M_1\cap M_2}=\hat{\beta}_{M_1\cap M_2}, \\
\qquad &\beta_{M_1\cap M_2^c}=\tau\hat{\beta}_{M_1\cap M_2^c}, \hspace{2mm}  \beta_{M_1^c\cap M_2}=(1-\tau)\Theta'\hat{\beta}_{M_1\cap M_2^c}\}.
\end{align}
where $\beta_M$ is defined as the values in $\hat{\beta}$ corresponding to the set $M$ and
$\Theta'$ is the matrix of coefficients obtained from regressing $X_{M_1\cap M_2^c}$ on $X_{M_1^c\cap M_2}$. We define $\Theta$ to be an augmented version of $\Theta'$, which we obtain by  regressing $X_{M_1}$ on $X_{M_2}$.  That is,
\begin{equation}
\label{Eq:projequiv}
X_{M_2}\Theta=\mathcal{P}_{M_2}X_{M_1}
\end{equation}
Note that on $\mathcal{B}$,
\begin{equation*}
X\beta=\tilde{X}\hat{\beta}_{M^{(-)}}+\tau X_{M_1}\hat{\beta}_{M_1}+(1-\tau)X_{M_2}\Theta\hat{\beta}_{M_1}
\end{equation*}
Recalling that $\tilde{Y}=Y-\tilde{X}\hat{\beta}_{M^{(-)}}$, minimizing \eqref{Eq:ENDef} on $\mathcal{B}$ is equivalent to minimize
\begin{align}
\label{Eq:whattomingroups}
 \nonumber
||\tilde{Y}-\tau X_{M_1}\hat{\beta}_{M_1}-(1-\tau)X_{M_2}\Theta\hat{\beta}_{M_1}||_2^2+&\lambda_1[\tau|| \hat{\beta}_{M_1}||_1+(1-\tau)||\Theta\hat{\beta}_{M_1}||_1] \\
+&\lambda_2[\tau^2||\hat{\beta}_{M_1}||_2^2+(1-\tau)^2||\Theta\hat{\beta}_{M_1}||_2^2]
\end{align}
as a function of $\tau$. Using a first-order condition and substituting \eqref{Eq:projequiv} we find that \eqref{Eq:whattomingroups} is minimized for
\begin{align}
\label{Eq:taustar}
\nonumber
&\tau^*=\\
&\frac{-(\tilde{Y}-\mathcal{P}_{M_2}X_{M_1}\hat{\beta}_{M_1})^T(I-\mathcal{P}_{M_2})X_{M_1}\hat{\beta}_{M_1}+\frac{\lambda_1}{2}(||\hat{\beta}_{M_1}||_1-||\Theta\hat{\beta}_{M_1}||_1)-\lambda_2||\Theta\hat{\beta}_{M_1}||_2^2}
{||(I-\mathcal{P}_{M_2})X_{M_1}\hat{\beta}_{M_1}||_2^2-\lambda_2||\hat{\beta}_{M_1}||_2^2-\lambda_2||\Theta\hat{\beta}_{M_1}||_2^2}.
\end{align}
Before we continue, note that if $X_2=X_1$ then $\Theta$ is the identity matrix and $\mathcal{P}_{M_2}X_{M_1}=X_1$. Substituting these facts into \eqref{Eq:taustar}, we get that $\tau^*=\frac{1}{2}$ as one might expect. Same result is obtained for the case $M_2\subseteq M_1$.

As it can be seen in \eqref{Eq:subspaceBgroups}, the coordinates of $\hat{\beta}_{M_2}$ are all different than zero if $\tau^*<1$. Now, since $\mathcal{P}_{M_2}(I-\mathcal{P}_{M_2})=0$ we get that  $\tau^*<1$ if
\begin{equation*}
\begin{split}
-\tilde{Y}^T(I-\mathcal{P}_{M_2})X_{M_1}\hat{\beta}_{M_1}+||(I-\mathcal{P}_{M_2})X_{M_1}\hat{\beta}_{M_1}&||_2^2-\frac{\lambda_1}{2}||\Theta\hat{\beta}_{M_1}||_1\\
&>-\frac{\lambda_1}{2}||\hat{\beta}_{M_1}||_1-\lambda_2||\hat{\beta}_{M_1}||_2^2
\end{split}
\end{equation*}
which is certainly true if
\begin{equation}
\label{Eq:almostend}
\tilde{Y}^T\mathcal{P}_{M_2}X_{M_1}\hat{\beta}_{M_1}-\frac{\lambda_1}{2}||\Theta\hat{\beta}_{M_1}||_1>-\frac{\lambda_1}{2}||\hat{\beta}_{M_1}||_1-\lambda_2||\hat{\beta}_{M_1}||_2^2+\tilde{Y}^TX_{M_1}\hat{\beta}_{M_1}
\end{equation}
which is true if the condition in \eqref{Eq:CondENgroups} is fulfilled for the appropriate $c_1$.
$\hfill \square$
 \section{Supplementary tables for Section \ref{SubSec:data}}
\begin{longtable}{|c|rrrrrrrrrr|}
   \caption{The 112 models selected for the riboflavin data. Each row is a model, the numbers are the column number (the gene) in $X$.}
   \label{Tab:RiboModels}\\
  \hline
Model &  &  &  &  &  &  &  &  &  &  \\
  \hline
1 & 1278 &  &  &  &  &  &  &  &  &  \\
  2 & 1279 &  &  &  &  &  &  &  &  &  \\
  3 & 4003 &  &  &  &  &  &  &  &  &  \\
  4 & 1516 &  &  &  &  &  &  &  &  &  \\
  5 & 1312 &  &  &  &  &  &  &  &  &  \\
  6 & 1278 & 4003 &  &  &  &  &  &  &  &  \\
  7 & 1303 & 4003 &  &  &  &  &  &  &  &  \\
  8 & 1279 & 4003 &  &  &  &  &  &  &  &  \\
  9 & 1278 & 4006 &  &  &  &  &  &  &  &  \\
  10 & 1279 & 4004 &  &  &  &  &  &  &  &  \\
  11 & 69 & 2564 & 4003 &  &  &  &  &  &  &  \\
  12 & 73 & 2564 & 4003 & & & & & & & \\
  13 & 144 & 2564 & 4003 & & & & & & &  \\
  14 & 69 & 2564 & 4004 & & & & & & & \\
  15 & 69 & 2564 & 4006 & & & & & & & \\
  16 & 792 & 1478 & 4002 & & & & & & & \\
  17 & 792 & 1478 & 4003 & & & & & & & \\
  18 & 792 & 1478 & 4004 & & & & & & & \\
  19 & 73 & 1279 & 2564 & 4004 & & & & & & \\
  20 & 73 & 1279 & 2564 & 4003 & & & & & & \\
  21 & 144 & 1279 & 2564 & 4004 & & & & & & \\
  22 & 73 & 1849 & 2564 & 4004 & & & & & & \\
  23 & 73 & 1279 & 2564 & 4006 & & & & & & \\
  24 & 144 & 1279 & 2564 & 4003 & & & & & & \\
  25 & 73 & 1279 & 1849 & 2564 & 4003 & & & & & \\
  26 & 69 & 1849 & 2564 & 3226 & 4003 & & & & & \\
  27 & 69 & 1425 & 1640 & 2564 & 4003 & & & & & \\
  28 & 69 & 1849 & 2564 & 3226 & 4004 & & & & & \\
  29 & 69 & 1425 & 1640 & 2564 & 4004 & & & & & \\
  30 & 144 & 792 & 1849 & 2564 & 4003 & & & & & \\
  31 & 144 & 1849 & 2564 & 3226 & 4003 & & & & & \\
  32 & 73 & 974 & 1279 & 2564 & 4003 & & & & & \\
  33 & 144 & 1278 & 1425 & 2564 & 4003 & & & & & \\
  34 & 73 & 792 & 2116 & 2564 & 4004 & & & & & \\
  35 & 73 & 1278 & 1849 & 2564 & 4004 & & & & & \\
  36 & 144 & 1278 & 1849 & 2564 & 4003 & & & & & \\
  37 & 73 & 1279 & 1425 & 2564 & 4004 & & & & & \\
  38 & 144 & 792 & 1849 & 2027 & 2564 & 4004 & & & & \\
  39 & 73 & 974 & 1278 & 1849 & 2564 & 4003 & & & & \\
  40 & 69 & 415 & 1849 & 2564 & 3226 & 4004 & & & & \\
  41 & 73 & 792 & 974 & 2116 & 2564 & 4003 & & & & \\
  42 & 73 & 1279 & 1640 & 1849 & 2564 & 4004 & & & & \\
  43 & 69 & 315 & 792 & 1849 & 2564 & 4004 & & & & \\
  44 & 73 & 792 & 1849 & 2027 & 2564 & 4004 & & & & \\
  45 & 73 & 792 & 1303 & 2116 & 2564 & 4003 & & & & \\
  46 & 69 & 792 & 1849 & 2027 & 2564 & 4003 & & & & \\
  47 & 69 & 792 & 1282 & 1849 & 2564 & 4003 & & & & \\
  48 & 69 & 792 & 1131 & 1849 & 2564 & 4003 & & & & \\
  49 & 73 & 1131 & 1278 & 1524 & 2564 & 4006 & & & & \\
  50 & 144 & 1131 & 1303 & 1524 & 2564 & 4006 & & & & \\
  51 & 73 & 792 & 1528 & 1849 & 2564 & 4003 & & & & \\
  52 & 73 & 792 & 1294 & 2116 & 2564 & 4003 & & & & \\
  53 & 69 & 1131 & 1278 & 1524 & 1762 & 2564 & 4006 & & & \\
  54 & 144 & 1279 & 1762 & 1820 & 2027 & 2564 & 4004 & & & \\
  55 & 69 & 1279 & 1425 & 1640 & 1820 & 2564 & 4006 & & & \\
  56 & 144 & 792 & 1312 & 1849 & 2027 & 2564 & 4004 & & & \\
  57 & 73 & 1278 & 1762 & 1820 & 1857 & 2564 & 4003 & & & \\
  58 & 73 & 1131 & 1279 & 1524 & 1528 & 2564 & 4003 & & & \\
  59 & 69 & 792 & 1303 & 1849 & 2484 & 2564 & 4003 & & & \\
  60 & 69 & 792 & 1639 & 1849 & 2027 & 2564 & 4003 & & & \\
  61 & 73 & 315 & 792 & 1278 & 1524 & 2564 & 4004 & & & \\
  62 & 69 & 315 & 1425 & 1524 & 1640 & 2564 & 4004 & & & \\
  63 & 73 & 1131 & 1279 & 1857 & 2116 & 2564 & 4004 & & & \\
  64 & 144 & 1131 & 1279 & 1857 & 2116 & 2564 & 4004 & & & \\
  65 & 144 & 1101 & 1131 & 1279 & 1762 & 2564 & 4004 & & & \\
  66 & 69 & 792 & 1131 & 1849 & 2564 & 3514 & 4004 & & & \\
  67 & 73 & 792 & 1279 & 1478 & 2027 & 2564 & 4002 & & & \\
  68 & 73 & 792 & 1131 & 1279 & 1312 & 2116 & 2564 & 4004 & & \\
  69 & 73 & 315 & 792 & 1279 & 1312 & 2116 & 2564 & 4004 & & \\
  70 & 73 & 315 & 792 & 1279 & 1503 & 2116 & 2564 & 4004 & & \\
  71 & 69 & 792 & 1131 & 1279 & 2116 & 2564 & 3288 & 4006 & & \\
  72 & 73 & 315 & 1279 & 1762 & 1849 & 2564 & 3288 & 4004 & & \\
  73 & 144 & 974 & 1131 & 1279 & 1524 & 2564 & 3514 & 4003 & & \\
  74 & 144 & 974 & 1131 & 1279 & 1425 & 1524 & 2564 & 4003 & & \\
  75 & 69 & 792 & 859 & 1131 & 1279 & 2116 & 2564 & 4004 & & \\
  76 & 73 & 792 & 1279 & 1849 & 2484 & 2564 & 4004 & 4006 & & \\
  77 & 73 & 974 & 1101 & 1131 & 1279 & 2564 & 3105 & 4004 & & \\
  78 & 73 & 792 & 1131 & 1312 & 2116 & 2242 & 2564 & 4004 & & \\
  79 & 73 & 792 & 974 & 1131 & 1364 & 2116 & 2564 & 4004 & & \\
  80 & 73 & 792 & 1131 & 1312 & 1639 & 2116 & 2564 & 4004 & & \\
  81 & 73 & 792 & 1131 & 1364 & 2116 & 2242 & 2564 & 4004 & & \\
  82 & 73 & 792 & 1131 & 1312 & 2116 & 2564 & 3905 & 4004 & & \\
  83 & 144 & 1131 & 1279 & 1524 & 1528 & 2484 & 2564 & 3465 & 4004 &  \\
  84 & 73 & 974 & 1131 & 1279 & 1524 & 2242 & 2564 & 3514 & 4006 &  \\
  85 & 73 & 974 & 1131 & 1279 & 1524 & 2242 & 2564 & 3465 & 4006 &  \\
  86 & 73 & 244 & 792 & 1131 & 1278 & 2116 & 2564 & 3104 & 4006 &  \\
  87 & 73 & 792 & 1131 & 1278 & 1297 & 2116 & 2564 & 3104 & 4006 &  \\
  88 & 144 & 1101 & 1279 & 1425 & 1640 & 2116 & 2484 & 2564 & 4004 &  \\
  89 & 69 & 859 & 1101 & 1640 & 1762 & 2484 & 2564 & 3226 & 4003 &  \\
  90 & 73 & 315 & 792 & 1303 & 1849 & 2564 & 4004 & 4006 & 4045 &  \\
  91 & 73 & 144 & 315 & 792 & 1279 & 1849 & 2462 & 2564 & 4004 &  \\
  92 & 73 & 315 & 792 & 1303 & 1849 & 2564 & 4004 & 4045 & 4075 &  \\
  93 & 73 & 827 & 1131 & 1279 & 1639 & 2242 & 2564 & 3465 & 4003 &  \\
  94 & 69 & 1312 & 1425 & 1640 & 1762 & 2116 & 2564 & 3104 & 4004 &  \\
  95 & 69 & 1312 & 1425 & 1528 & 1640 & 1762 & 2116 & 2564 & 4004 &  \\
  96 & 73 & 624 & 792 & 1131 & 1278 & 1849 & 1855 & 2564 & 4004 &  \\
  97 & 144 & 827 & 1131 & 1279 & 1639 & 2242 & 2564 & 3465 & 4003 &  \\
  98 & 73 & 792 & 1131 & 1279 & 2027 & 2116 & 2564 & 3104 & 3288 & 4004 \\
  99 & 73 & 974 & 1131 & 1279 & 1364 & 1524 & 2027 & 2564 & 3104 & 4003 \\
  100 & 73 & 859 & 974 & 1131 & 1279 & 1364 & 1524 & 2484 & 2564 & 4003 \\
  101 & 73 & 624 & 792 & 1131 & 1279 & 1849 & 2116 & 2564 & 3226 & 4004 \\
  102 & 73 & 1303 & 1524 & 1762 & 2027 & 2484 & 2564 & 3905 & 4004 & 4075 \\
  103 & 69 & 974 & 1425 & 1524 & 1640 & 2116 & 2484 & 2564 & 3288 & 4004 \\
  104 & 69 & 974 & 1278 & 1425 & 1524 & 1640 & 2484 & 2564 & 3288 & 4004 \\
  105 & 73 & 315 & 1279 & 1294 & 1762 & 2027 & 2462 & 2564 & 4003 & 4075 \\
  106 & 69 & 974 & 1278 & 1425 & 1524 & 1640 & 2484 & 2564 & 3105 & 4004 \\
  107 & 69 & 974 & 1425 & 1524 & 1640 & 1857 & 2484 & 2564 & 3288 & 4004 \\
  108 & 73 & 859 & 1131 & 1278 & 1297 & 1524 & 2462 & 2484 & 2564 & 4003 \\
  109 & 73 & 1278 & 1425 & 1524 & 1639 & 2484 & 2564 & 3465 & 4003 & 4045 \\
  110 & 73 & 792 & 1131 & 1279 & 1478 & 2116 & 2484 & 2564 & 3465 & 4006 \\
  111 & 73 & 859 & 1131 & 1278 & 1503 & 1524 & 2462 & 2484 & 2564 & 4003 \\
  112 & 73 & 415 & 859 & 1131 & 1278 & 1503 & 1524 & 2484 & 2564 & 4003 \\
   \hline
\end{longtable}
\begin{table}[ht!]
\begin{center}
\caption{Potential predictors for mortality rate in Section \ref{SubSubSec:mortData}}
\label{Tab:mortPredictors}
\begin{tabular}{ll}
\hline
Predictor & Description\\
\hline
\textbf{prec} & Mean annual precipitation in inches \\
\textbf{jant} & Mean January temperature in degrees F\\
\textbf{jult} & Mean July temperature in degrees F\\
\textbf{age65} & Percentage of population aged 65 or older\\
\textbf{pphs}& Population per household\\
\textbf{educ}& Median school years completed by those over 22\\
\textbf{facl}& Percentage of housing units which are sound and with all facilities\\
\textbf{dens}& Population per square mile in urbanized areas\\
\textbf{nwht}& Percentage of non-white population in urbanized areas\\
\textbf{wtcl}& Percentage of employed in white collar occupations\\
\textbf{linc}& Percentage of families with income $<$ 3,000 dollars in urbanized\\ & areas\\
\textbf{HC}& Relative pollution potential of hydrocarbon\\
\textbf{NOx}& Relative pollution potential of nitric oxides\\
\textbf{SUL}& Relative pollution potential of sulphur dioxide\\
\textbf{hum}& Annual average percentage of relative humidity at 1pm\\
\hline
\end{tabular}
\end{center}
\end{table}
\pagebreak
\bibliography{highdim}{}
\bibliographystyle{newapa}
\end{document}